%% file: 0-qbu2po-main.tex
\documentclass{llncs}
\newif\ifanonoymous
\anonoymousfalse

\usepackage[T1]{fontenc}
\usepackage[utf8]{inputenc}
\usepackage{amsmath}
\usepackage{amssymb}
\usepackage{mathtools}
\usepackage{graphicx}
\usepackage{xcolor}
\usepackage{enumitem}
\usepackage{algorithm}
\usepackage{algorithmic}
\usepackage{tikz}
\usepackage{quantikz}
\usetikzlibrary{calc,fit,positioning,arrows.meta}
\definecolor{figbg}{RGB}{246,246,255}
\definecolor{figborder}{RGB}{176,179,226}
\definecolor{gateborder}{RGB}{90,94,150}

\usepackage[bookmarks=false,hidelinks,pagebackref]{hyperref}
\renewcommand*{\backref}[1]{}

\renewcommand*{\backrefalt}[4]{%
  \ifcase #1
    \relax
  \or
    \space(Cited on p.~#2.)%
  \else
    \space(Cited on pp.~#2.)%
  \fi
}
\usepackage[capitalise,nameinlink]{cleveref}   
\newcommand{\eps}{\varepsilon}
\newcommand{\negl}{\mathsf{negl}}
\newcommand{\poly}{\mathrm{poly}}

\newcommand{\E}{\mathop{\mathbb{E}}}

\newcommand{\X}{\mathcal X}
\newcommand{\Y}{\mathcal Y}
\newcommand{\K}{\mathcal K}
\newcommand{\Prb}{\mathop{\rm Pr}}
\newcommand{\PO}{\textsf{PO}}
\newcommand{\BU}{\textsf{BU}}
\newcommand{\dBU}{\partial\textsf{BU}}
\newcommand{\secpar}{\lambda}
\newcommand{\norm}[1]{\left\lVert#1\right\rVert}

\let\oldproof\proof
\let\oldendproof\endproof

\renewenvironment{proof}[1][]{%
  \if\relax\detokenize{#1}\relax
    \oldproof
  \else
    \oldproof[#1]
  \fi
}{%
  \unskip\nobreak\hfill\qed
  \oldendproof
}
\begin{document}

\title{
Blind Unforgeability $\Longrightarrow$ Plus-One Unforgeability
}
\ifanonoymous
  \author{}
  \authorrunning{Anonymous Submission \#\,\thanks{Anonymous submission to Eurocrypt 2027.}}
  \institute{}
\else
\author{
  Mehil Agarwal \and
  Calliope S. Reimann \and
  Fang Song
}
\authorrunning{M. Agarwal, C. Reimann, and F. Song}
\institute{
  Department of Computer Science, Portland State University, USA\\
  \email{\{mehil,creimann,fang.song\}@pdx.edu}
}
\fi

\maketitle
\thispagestyle{plain}   

\begin{abstract} 
We prove that quantum blind-unforgeability implies quantum plus-one unforgeability, settling an open question in the literature. Since plus-one unforgeability is known not to imply blind-unforgeability, our result establishes that blind-unforgeability is \emph{strictly stronger} than plus-one unforgeability in the quantum setting.

The implication is proven using a new \emph{smoothing} technique that coherently rescales the query histories of a quantum-query algorithm in a black-box manner. This enables us to transform any successful plus-one attacker into a blind-unforgeability attacker with only inverse-polynomial loss.  Our proof represents a conceptual shift from merely extracting a useful query history as in earlier attempts, which we show inevitably fails in general, to first coherently reshaping how query histories interfere.

\keywords{message-authentication-code \and blind unforgeability \and plus-one unforgeability \and quantum-superposition attacks} 
\end{abstract}

\newpage
\section{Introduction}
\label{sec:intro}
\input{intro}

\section{Preliminaries}
\label{sec:prelim}
\input{prelim}


\section{BU Equals Singleton-Puncture Density Unforgeability}
\label{sec:bu=spdu}
\input{bu-equiv-spdu}

\section{$\dBU$ implies $\PO$}
\label{sec:spdu2po}
\input{dbu2po}

\section{Ruling Out Raw Singleton Extractions}
\label{sec:counterexample}
\input{counterexample}


\ifanonoymous\else
\subsubsection*{Acknowledgments.}
The authors were partially supported by the USA NSF grant no. 2054758 (CAREER). 
\fi

\newpage
\bibliographystyle{splncs04}
\bibliography{bu2po}

\appendix
\newpage
\section*{Supplementary Material}
\addcontentsline{toc}{section}{Supplementary Material}

\input{appendix}

\end{document}

\typeout{get arXiv to do 4 passes: Label(s) may have changed. Rerun}

%% file: intro.tex
Message authentication codes (MACs) are a fundamental primitive in
symmetric-key cryptography.  At the core, they formalize a basic
unpredictability requirement for a keyed function family \(F_k : \mathcal X \rightarrow \mathcal Y\), where $F_k(x)$ is the authentication tag of message $x$.  In the classical chosen-message attack model (see, e.g.,~\cite{BKR00}), an adversary may adaptively query $F_k(\cdot)$ and obtain a list of tags $F_k(x_1),\ldots,F_k(x_q)$. But it should nevertheless be infeasible for the adversary to output a valid pair \((x^\star,F_k(x^\star))\) for a \emph{fresh} message \(x^\star\notin\{x_1,\ldots,x_q\}\).  

The situation changes qualitatively when the adversary is granted quantum
oracle access, where a single query may have the form
\[
    \sum_x \alpha_x \ket{x,0}
    \longmapsto
    \sum_x \alpha_x \ket{x,F_k(x)} \, .
\]
Hence the classical criterion ``predict at an unqueried point'' has no literal quantum analogue, and it becomes elusive what it should mean for a classical function to remain unpredictable in the presence of quantum superposition access.

Two prominent approaches~\cite{BZ13,AMRS20} replace classical freshness in rather different ways.  Boneh and Zhandry~\cite{BZ13} proposed a \emph{counting}-based notion: after making $q$ quantum queries, an adversary should not be able to output $q+1$ distinct valid input-output pairs.  This is usually referred to as \emph{plus-one
unforgeability} (\PO), and it avoids referring to a query transcript altogether.

Alagic, Majenz, Russell, and Song~\cite{AMRS20} proposed instead
\emph{blind-unforgeability} (\BU).  A random subset
$B\subseteq\mathcal X$ is sampled by independently blinding each input with
some rate $\beta$, and the adversary receives the modified oracle
\[
    F_k^{\setminus B}(x)
    =
    \begin{cases}
        \bot, & x\in B,\\
        F_k(x), & x\notin B.
    \end{cases}
\]
The adversary wins if it outputs $(x,F_k(x))$ for a blinded point $x\in B$.
Thus BU replaces transcript-based freshness by
\emph{counterfactual withholding}: a successful forgery is fresh because
its value was deliberately unavailable from the oracle throughout the
attack.

These two approaches, \emph{counting} versus \emph{withholding}, suggest rather different interpretations of quantum
freshness.  Their relationship
has remained unresolved.  It was shown in~\cite{AMRS20} that $\PO$ does not
imply $\BU$ by exhibiting a family that is secure according to $\PO$ but admits a direct attack detected by $\BU$.  An earlier version of the same work also claimed the converse direction that $\BU$ implies $\PO$.  It was later withdrawn due to an error and was replaced by a weaker claim that $\BU$ security implies that producing $q^2$ valid pairs out of $q$ queries is difficult.  This leaves a basic question:

\begin{center}
    \emph{Do $\PO$ and $\BU$ capture genuinely incomparable aspects of quantum
    unpredictability, or is one notion strictly stronger than the other?}
\end{center}

This unsettled situation actually inspired some recent effort~\cite{JKLL26} proposing a unified definition that tunes between $\PO$ and $\BU$. 

\paragraph{Our result.}
We resolve this question by proving that blind-unforgeability indeed implies
plus-one unforgeability.  Hence the withholding protection imposed by $\BU$ strictly subsumes the counting-based protection of $\PO$.

\begin{theorem}[Informal] $\BU \Rightarrow \PO$ holds for
    any quantum-accessible keyed function family.
    \label{thm:main-informal}
\end{theorem}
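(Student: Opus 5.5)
The plan is a reduction: from any $q$-query plus-one forger $A$ that outputs $q+1$ distinct valid pairs with non-negligible probability $\eps$, I will build a $\BU$ forger with success $\eps/\poly(q)$. The section titles suggest the route, and I will follow it in three stages.

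\textbf{Stage 1: reduce $\BU$ to singleton-puncture density unforgeability ($\dBU$).} First I show that $\BU$ is equivalent to an intermediate notion in which the blinding set is effectively a single puncture. The adversary gets $F_k$ with one point $x^\star$ punctured (replaced by $\bot$). It wins by outputting $(x^\star,F_k(x^\star))$, and its advantage is measured as a density: an average over $x^\star$ weighted by the adversary's own behaviour.

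The direction I need is $\BU \Rightarrow \dBU$. I would prove it with blinding rate $\beta\approx 1/q$. With constant probability, the blinding set then contains the target point while missing most of the query mass. Standard hybrid/one-way-to-hiding bounds would control the difference between the $\bot$-on-$B$ oracle and the singleton-punctured oracle. The paper announces this equivalence as \cref{sec:bu=spdu}, so I take it as given.

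\textbf{Stage 2: the naive extraction and why it fails.} Run $A$ with the punctured oracle. Its output list has $q+1$ distinct points, and I want one output point $x_i$ such that puncturing $x_i$ barely disturbs the run. The obvious argument is an averaging/pigeonhole over "query weight": the total squared amplitude spent querying points sums to $q$, so some output point should carry query weight at most $q/(q+1)<1$.

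The problem is that weight strictly less than $1$, rather than $o(1)$, does not by itself keep the state close. Interference can make a weight-$(1-1/q)$ puncture destroy success entirely. I take this to be exactly why earlier attempts failed (\cref{sec:counterexample}). This step is the main obstacle.

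\textbf{Stage 3: smoothing to make the extraction work.} To get around this, I would first transform $A$ into a smoothed algorithm $\tilde A$ with the following property: for every fixed point, its query history is coherently rescaled so that the amplitude on histories that touch the point decomposes into pieces each contributing at most a $1/\poly$ fraction.

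Concretely, I would try to construct $\tilde A$ as follows:
\begin{enumerate}
\item Choose a random subset of query slots, and with a random phase or a random number of padding steps insert dummy queries.
\item Equivalently, run $A$ and independently decide, for each query, whether to "actually query" or to simulate the answer.
\item Choose the mixture weights so that the answer distribution at any fixed point is a convex combination in which the punctured branch has controlled interference.
\end{enumerate}
In the resulting superposition, puncturing one point $x$ changes the final state by at most $O(\sqrt{w_x})$ times a smoothing factor. Here $w_x$ is the query mass at $x$, and smoothing should turn the bound into a genuinely linear (density) estimate rather than one that saturates.

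Combining this with the pigeonhole over the $q+1$ outputs yields a point with average disturbance bounded away from total. Outputting the corresponding pair then gives a $\dBU$ forgery with probability $\Omega(\eps/q^{c})$, and Stage 1 converts this into a $\BU$ forgery, contradicting $\BU$ security.

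Getting the smoothing right is the crux. I would first try a continuous-time / Poissonized version of $A$: each query is executed with probability $p$, so that the dependence of success on puncturing becomes a polynomial in $p$. I would then use polynomial interpolation (Markov brothers–type bounds) to recover a $p$ where the puncture's effect is small while the success probability stays inverse-polynomially large.
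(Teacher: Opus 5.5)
Your three-stage outline (reduce to a singleton-puncture density notion, observe that raw extraction fails by interference, smooth first) matches the paper's architecture. The gap is Stage 3, where the actual proof has to happen; as proposed it does not go through. Skipping or ``simulating'' queries with probability $p$, or padding with dummy queries, creates branches that never learn the corresponding values of $F_k$. Classical per-query mixing also neither preserves the interference you need to control nor distinguishes histories by which reported points they omit. You never identify a quantity whose interpolation is forced to put mass on never-queried reported points. The disturbance heuristic cannot replace this. In the paper's all-or-nothing example, for every reported $a$ the accepting amplitude of the histories that never touch $a$ cancels exactly, so the entire success amplitude lives on histories touching $a$, and any bound on how much puncturing $a$ perturbs the unmodified run is vacuous.

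The missing idea is to coherently reweight histories by the \emph{number} of omitted reported points. The paper samples $t\in[0,1]$ and phases $z_x=(t+u_x)/(1+tu_x)$ with $u_x$ uniform on the circle, so that $\E[z_x^j]=t^{|j|}$. It multiplies each query address $x$ by $z_x^*$ and the final output by $\prod_{x\in X}z_x$. A history $h$ reporting $X$ then carries phase $z^{\mathbf 1_X-\nu_h}$, whose average is $t^{2|M_X(h)|-1}$. Let $v_M$ be the accepting component omitting exactly $M$, and $H(t)=\sum_{M\neq\emptyset}t^{2|M|}v_M$, so $H(0)=0$ and $H(1)=v$. Then the averaged zero-touch components satisfy $\sum_{a\in X}U_a(t)=\tfrac12H'(t)$. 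Cauchy--Schwarz over the $q+1$ points, the energy bound $\int_0^1\norm{H'(t)}^2dt\ge\norm{v}^2$, and Jensen over $z$ give $\E_{t,z}\Xi(A_{t,z})\ge\delta/(4(q+1))$. This interpolation is controlled by an energy bound, not by Markov-type inequalities.

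You also miss a second cancellation. Even with large zero-touch mass, running the smoothed algorithm against $F_k^{\setminus a}$ lets histories that do touch $a$ interfere destructively with the zero-touch ones. The paper removes this with a touch-count twirl. It picks $r$ uniform in $\mathbb Z_{q+1}$ and applies phase $\omega^r$, with $\omega=e^{2\pi i/(q+1)}$, whenever the punctured oracle returns $\bot$ (two oracle calls per simulated query). This kills all cross terms between touch counts $0,\dots,q$. Querying the $q+1$ reported inputs then locates the unique $\bot$, giving $\rho_F(C)\ge\Xi(B)$ with $3q+1$ queries.

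Finally, citing the $\BU\Leftrightarrow\dBU$ equivalence is fine, but your sketch of it would not work. At $\beta\approx 1/q$ a fixed point is blinded only with probability about $1/q$, not constant. The blinded set carries $\Theta(1)$ expected query mass, which hybrid bounds do not control. And $\rho_F$ is an unnormalized sum over possibly exponentially many punctures, so per-puncture hybrid errors cannot be summed. The paper instead observes that $G_A(\beta)$ is a polynomial of degree at most $2q+1$ with $G_A(0)=0$ and $G_A'(0)=\rho_F(A)$. Markov's inequality on $[0,1]$ then gives $\sup_\beta G_A(\beta)\ge\rho_F(A)/(2(2q+1)^2)$. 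That is where the Markov-brothers tool you reserved for Stage 3 actually belongs.
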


In other words, denoting $\mathcal{F}_{\PO}$ and $\mathcal{F}_{\BU}$ as classes of function families satisfying each security respectively, we have that $\mathcal{F}_{\BU} \subsetneq \mathcal{F}_{\PO}$.  More specifically, suppose that a $q$-query $\PO$ adversary $A$ correctly outputs
$q+1$ distinct pairs with probability $\delta$.  We construct a
$\BU$ adversary making at most $3q+1$ queries to a blinded oracle with a suitable blinding rate that breaks $\BU$ with probability
$\Theta\left({\delta}/ {q^3}\right) - \negl(\secpar)$.  In particular, a noticeable $\PO$ advantage yields noticeable $\BU$ advantage with
only polynomial loss.

\subsection{Technical overview}
\label{sec:tech-overview}

It is helpful to first review the weaker implication in~\cite{AMRS20}, which shows
that $\BU$ rules out attacks producing \emph{quadratically} more valid
input-output pairs than the query quota.  Loosely speaking, the proof of~\cite{AMRS20} first chooses the Bernoulli
blinding rate small enough around $\beta=O(1/q^2)$ that the blinded oracle does not substantially
disturb a $q$-query adversary due to a Grover-type argument.  But then, for the adversary to hit a blinded point with
noticeable probability, it must output on the order of $1/\beta=\Omega(q^2)$
candidate pairs.  Thus the quadratic output threshold is not accidental, and it is exactly the amount needed to compensate for the small blinding rate required to preserve the computation.    

This reasoning stops at the genuine plus-one regime.  If the
adversary reports only $q+1$ points, the same blindness rate
$\beta=\Theta(1/q^2)$ hits a reported point with probability only
$O(1/q)$, and increasing $\beta$ enough to obtain a constant hit probability
causes too much disturbance to the quantum computation.  Therefore
one is led to exploit stronger combinatorial features specific to $\PO$
security, as attempted in earlier work.  Along every successful $q$-query history producing $q+1$
distinct pairs, at least one reported point was never queried.  Hence a $q$-query history
can touch at most $q$ distinct inputs.  Classically, this 
pigeonhole observation would suggest puncturing an omitted reported point and retaining the corresponding successful part of the computation. For a quantum algorithm, however, a query history means a coherent computation branch labeled by the classical sequence of query addresses along that branch.  The different histories are recombined coherently, and the ``leave-one-out'' contributions associated with different histories may cancel.

This is not merely a limitation of a particular analysis.  We construct an explicit ``all-or-nothing'' $\PO$ adversary that succeeds with probability arbitrarily close to one, yet has the following striking property: for every reported point $x$, the successful histories that never queried $x$, when coherently combined, cancel exactly. In other words, the adversary has no usable raw leave-one-out component at any reported point.  Thus even though every successful history omits some reported point, the original interference pattern of the adversary needs not contain any usable singleton leave-one-out witness.  In particular, reductions whose quantitative guarantee factors through such raw zero-touch histories, such as running the $\PO$ adversary unchanged and attempting to extract a forgery from its pre-existing leave-one-out components, can fail completely.

The lesson is that the one-unit surplus is present path by path, but may be hidden by interference when those paths are recombined.  Our reduction therefore does not try to extract the surplus from the original interference pattern.  Instead, it first \emph{reshapes} that pattern by coherently reweighting different query histories.  The resulting
\emph{Poisson smoothing} detunes the cancellations that hide the zero-touch contribution and makes a subsequent singleton extraction possible.

\paragraph{The first cancellation problem.}
Let us take a closer look at the destructive cancellation.  Consider a $\PO$ attacker $A$ with success probability $\delta$. For an intuitive explanation, fix one successful output set $X$ with $|X|=q+1$. For a query history $h$, let $M(h)\subseteq X$ denote the omitted ones that the history never queries, and necessarily $M(h)\neq\emptyset$.  We then decompose $A$'s successful state according to omitted sets. Namely, for each possible $M\subseteq X$, we coherently collect all query histories that omit exactly the subset $M$, and let $v_M$ denote the component state contributed by these histories in the final output state.  Then the part in the output state that contributes to $\PO$ success is thus 
\[  v :=\sum_{\emptyset \neq M\subseteq X} v_M \quad \text{ with } \norm{v}^2 \ge \delta \, .  \]
    
Then consider a candidate reported point $x\in X$. We can see that the histories that report $x$ but
never touch it in the queries contribute $\xi_x:=\sum_{M\ni x} v_M$. 
It is tempting to conclude that some $\xi_x$ must be large because every
successful history belongs to at least one such class.  The obstruction is
that these classes overlap.  Compare
\[
    \xi: = \sum_{x\in X} \xi_x
    =
    \sum_{\emptyset\neq M\subseteq X} |M|\,v_M , \quad \text{ v.s. } \quad v=\sum_{\emptyset\neq M\subseteq X} v_M \, .
\]
It can happen that the intact $\PO$ success from the \emph{unweighted} sum $v$ is  large, but the former \emph{weighted} sum $\xi$ cancels almost completely.  This is the
precise interference phenomenon exploited by our all-or-nothing counterexample.

\paragraph{Poisson smoothing: detuning the cancellation.}
Our main technical idea is to coherently rescale the relative weights of the different omission layers.  We transform $A$ into a randomized family of algorithms $\{A_{t,z}\}$ indexed by $t\in[0,1]$ and random phases $z = \{z_x\}_{x\in \X}$.

The transformation has a clean circuit description\footnote{Actual implementation needs care on finite-precision discretization and numerical well-conditioning which we do not elaborate in the introduction. See~\cref{sec:implementation}.}.  Before
each oracle query at $x$, we multiply the query amplitude
by a random phase $z_x^*$, and after $A$ has coherently produced its final output list $X$, we apply a compensating phase $\Pi_{x \in X} z_x$ (see~\cref{fig:poisson-smoothing}).  The random phases are chosen so that,
after averaging, a successful history omitting $r$ reported points receives
the scalar multiplier $t^{2r-1}$. Thus histories with different omission cardinalities are
reweighted differently as $t$ varies.  Note that at $t=1$ the original adversary $A$ is recovered, and should a cancellation occur in the original computation, generally it is 
detuned away from that endpoint.  To see it quantitatively, define
\[
    H(t)
    :=
    \sum_{M\neq\emptyset} t^{2|M|}v_M ,
\]
which interpolates between $H(0)=0$ {and} $H(1)=v$ (recall $v$ is the total successful state contribution of the original $\PO$ attacker).  After smoothing,  the zero-touch contribution at each reported point $x$ becomes  
\[
    \xi_x(t)
    :=
    \sum_{M\ni x} t^{2|M|-1}v_M \, .
\]
The choice of exponent is
precisely such that
\[
    \sum_{x\in X}\xi_x(t)
    =
    \frac12 H'(t) \, .
\]
Then by standard methods in analysis (e.g., Cauchy-Schwarz and energy bound for a differentiable path), we can get that 
\[
    \int_0^1
    \sum_{x\in X}\|\xi_x(t)\|^2\,dt
    \ge
    \frac{1}{4(q+1)}   \int_0^1 \|H'(t)\|^2\,dt
    \ge \frac{\|v\|^2}{4(q+1)}.
\]
Note that the left-hand side is the zero-touch mass accumulated along the interpolation: for each \(t\), we sum the squared norms of the zero-touch contributions \(\xi_x(t)\) over all reported points \(x\in X\), and then average this quantity over $t \in [0,1]$.  Hence, even if the zero-touch contributions cancels completely at a
particular value of $t$, they cannot remain small throughout the smoothing
interpolation.  Averaging over $t$ and the random phases therefore produces
a smoothed adversary $B$ with noticeable zero-touch forgery mass.  Notice that that this transformation does not measure or record a query history.  It only changes the relative phases with which histories interfere.  

Finally how to get the desired random phases?  We will sample $u_x$ uniformly on the unit circle (i.e., $u_x : = e^{2\pi i \theta_x}$ with uniform $\theta_x \in [0,1)$), and then take
\[
    z_x:=\frac{t+u_x}{1+t u_x} \, .
\]

In Fourier analysis, this coincides with the \emph{Poisson-kernel}
distribution, and $\E[z_x^j]=t^{|j|}$ is explained by the fact that the $j$th Fourier coefficient is exactly $t^{|j|}$.  For this matter, we view this step as a form of \emph{Poisson smoothing of coherent query histories}, which \emph{detunes} cancellations between different omission layers until zero-touch mass becomes visible rather than to identify or measure useful histories.

\paragraph{The second cancellation problem: making zero-touch mass
operational.}
Poisson smoothing gives an adversary $B$ with large zero-touch mass, however one cannot simply run
$B$ against a punctured oracle.  Fix a puncture $x^\star$.  Histories that never touch $x^\star$ are unaffected by the puncture, but
histories that do touch it are changed.  Their new amplitudes may again interfere destructively with the desired zero-touch sector.

This second source of cancellation is relatively easier to deal with by a \emph{root-of-unity touch-count twirl}.  During the punctured execution, we attach a random root-of-unity phase according to the number of times the history touches the puncture.  Averaging over the random phase dephases the
different touch-count compoenents $0,1,\ldots,q$.  In particular, the zero-touch histories remain intact and can no longer
destructively interfere with any positive-touch components.  Note that this procedure 
does not need to know the puncture $x^\star$.

In the end, upon obtaining $q+1$ candidate pairs, we query the punctured
oracle on the reported inputs.  The unique $\bot$ response identifies the
puncture, while the remaining pairs can be verified.  This gives a singleton-puncture completion algorithm $C$, and we can quantify its performance by the \emph{singleton-puncture completion density}
\[
    \rho(C)
    :=
    \mathbb E_k
    \sum_{x\in\mathcal X}
    \Pr\!\left[
        C^{F_k\setminus\{x\}}
        \text{ outputs }(x,F_k(x))
    \right] \, ,
\]
where $C$ is not given $x$-punctured oracle $F_k^{\setminus \{x\}}$.  The smoothing and zero-touch extraction
together will give $\rho(C)\gtrsim {\delta}/{q}$ using at most $3q+1$ oracle queries.

\paragraph{From singleton punctures to Bernoulli blinding.}
Finally, it remains to show that large singleton-puncture completion density $\rho(C)$ suffices for $C$ to break $\BU$.  This follows because singleton puncturing can be seen as the first-order form of Bernoulli blinding. The singleton-puncture density \(\rho(C)\) is the derivative at blinding rate \(\beta=0\) of the $\BU$ success probability, and polynomial degree bounds for \(O(q)\)-query algorithms imply that a large derivative forces noticeable $\BU$ success for some \(\beta\), with only an \(O(q^2)\) loss. Hence noticeable \(\rho(C)\) already breaks $\BU$, offering overall \(\BU\) advantage $\Omega(\delta/{q^3})$. In fact, this gives an equivalent characterization of $\BU$, which we formalize as \emph{singleton-puncture density unforgeability} (see~\cref{sec:bu=spdu}). 
\subsection{Discussions}
\label{sec:open}
Our quantitative reduction loses a polynomial factor in the query complexity. It would be interesting to
determining the optimal loss, or identifying which losses are inherent. In addition, beyond resolving the relationship between $\BU$ and $\PO$,
our proof highlights two broader lessons about quantum-query security.

First, successful quantum information needs not reside in a robust classical query history, but may instead be encoded in interference itself, as the all-or-nothing failure example illustrates.  Second, the proof suggests a different way to reason about such interference.  Rather than recording query histories, at the risk of destroying their coherence, one may modify them using random phases and control the resulting interpolation.  

This viewpoint is complementary to a broad line of proof techniques in the quantum-random oracle model~\cite{BDFLSZ11,Unruh14,Unruh15},  including recording and compressed-oracle methods~\cite{Zhandry19,CFHL21}, measure-and-reprogram and adaptive-reprogramming
techniques~\cite{LZ19,DFM20,GHHM21,DFMS22,CGLS24,CGLS25}. Poisson smoothing takes a different route. It does not attempt to identify or reprogram a query, but instead coherently changes the relative
weights of query histories and hence their interference pattern.  It is interesting to investigate whether other physically implementable phase distributions can be useful, where traditional query-recording or
reprogramming arguments encounter destructive interference.

\paragraph{AI disclosure.} LLM (ChatGPT Sol-5.6 high) has been used in the following occasions. 
\begin{itemize}
    \item Beyond routine typo and grammar checks, LLM was used to produce the
    figures in this paper from scans of the authors' handwritten sketches. The authors then edited and refined them.

    \item Starting from the authors' description of a function together with
    a Grover-type $\PO$ attacker that deliberately aborts on the blinding
    symbol $\bot$ and hence fails to win $\BU$,  LLM refined it into the
    complete cancellation construction of the All-or-Nothing theorem
    (\cref{thm:aon-cancellation}). The authors verified the construction and
    wrote the proof.

    \item Most substantively, the Poisson smoothing method
    (\cref{thm:singleton-puncture-smoothing}) was suggested by LLM in a
    conversation exploring how to rescale query histories so that they admit
    an efficient quantum circuit implementation. The authors verified it and developed the full proof.
\end{itemize}
The authors conducted independent verification in each case of the LLM output, and accept full responsibility for the veracity and correctness of all material in this paper.

%% file: prelim.tex
\subsection{Notations and Useful Facts}
\label{sec:basics}
We write $\poly(\secpar)$ for an unspecified polynomial in the security
parameter $\secpar$.  A function $\mu(\secpar)$ is \emph{negligible} if
$\mu(\secpar)<1/p(\secpar)$ for every polynomial $p$ and all sufficiently
large $\secpar$, and \emph{noticeable} if
$\mu(\secpar)\ge 1/\poly(\secpar)$ for all sufficiently large $\secpar$.  We write $\negl(\secpar)$ for an unspecified negligible function.  For a complex number $z\in \mathbb{C}$, $z^*$ denotes its complex conjugate. For a real-valued function
$f$, we write $f'$ for its derivative.  For a finite set $X$ and
$\beta\in[0,1]$, we write $B \xleftarrow{\beta} X$ to mean that $B\subseteq X$ is sampled according to the Bernoulli product
distribution with rate $\beta$, i.e., each $x\in X$ is included in $B$
independently with probability $\beta$.

\begin{lemma}[Markov's inequality~\cite{BE95}]
For a polynomial \(p\) of degree at most \(d\) in \([-1,1]\),
\[
\norm{p'}_\infty \le d^2 \norm{p}_\infty \, . \]
\label{lemma:markov}
\end{lemma}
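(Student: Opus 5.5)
Since this is the classical Markov inequality, cited from~\cite{BE95}, I would reprove it by the standard two-regime argument: Bernstein's inequality for trigonometric polynomials in the interior of $[-1,1]$, and Lagrange interpolation at Chebyshev nodes near the endpoints. Normalize so that $\norm{p}_\infty = 1$ on $[-1,1]$, and let $T_d(\cos\theta)=\cos(d\theta)$ and $U_{d-1}(\cos\theta)=\sin(d\theta)/\sin\theta$ denote the Chebyshev polynomials.

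\emph{Step 1 (interior).} Set $T(\theta):=p(\cos\theta)$. This is a trigonometric polynomial of degree $d$ with $\norm{T}_\infty\le 1$. I would take Bernstein's inequality $|T'(\theta)|\le d\,\norm{T}_\infty$ as the basic input. If a self-contained proof were needed, it would come from the Riesz interpolation formula or from a zero-counting argument against $\cos(d\theta+c)$. Since $T'(\theta)=-\sin\theta\, p'(\cos\theta)$, this gives
\[
|p'(x)|\sqrt{1-x^2}\le d \qquad\text{for all } x\in[-1,1].
\]
In particular $|p'(x)|\le d^2$ whenever $\sqrt{1-x^2}\ge 1/d$.

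\emph{Step 2 (endpoints).} Let $x_j=\cos\bigl((2j-1)\pi/(2d)\bigr)$, $j=1,\dots,d$, be the zeros of $T_d$, and let $\ell_j$ be the Lagrange basis polynomials for these nodes. Since $p'$ has degree at most $d-1$, we can write $p'(x)=\sum_j p'(x_j)\ell_j(x)$ exactly. Step 1 bounds $|p'(x_j)|\le d/\sqrt{1-x_j^2}$.

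For $x\ge x_1$, the largest node, one has $\ell_j(x)=T_d(x)/\bigl(T_d'(x_j)(x-x_j)\bigr)$. Here $x-x_j\ge 0$, $T_d(x)\ge 0$, and $\operatorname{sgn}T_d'(x_j)=(-1)^{j-1}$, so $\operatorname{sgn}\ell_j(x)=(-1)^{j-1}$. On the other hand, $U_{d-1}(x_j)=(-1)^{j-1}/\sqrt{1-x_j^2}$, and $U_{d-1}$ also has degree $d-1$, so it equals its own interpolant. Combining these facts gives
\[
|p'(x)|\le d\sum_j \frac{|\ell_j(x)|}{\sqrt{1-x_j^2}} = d\sum_j U_{d-1}(x_j)\ell_j(x)=d\,U_{d-1}(x)\le d\,U_{d-1}(1)=d^2 .
\]
The last inequality uses that $U_{d-1}$ is increasing to the right of its largest zero, and $U_{d-1}(1)=d$. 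The case $x\le -x_1$ follows by symmetry, applying the argument to $p(-x)$.

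\emph{Step 3 (covering).} The two regimes must overlap. Indeed $x_1=\cos(\pi/(2d))$, and $1-\cos(\pi/2d)\ge 1-\sqrt{1-1/d^2}$, because $\sin(\pi/(2d))\ge 1/d$ by concavity of $\sin$ ($\sin y\ge 2y/\pi$). Hence every $x$ with $|x|<x_1$ satisfies $\sqrt{1-x^2}\ge 1/d$ and is handled by Step 1, and every other $x$ is handled by Step 2.

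The main obstacle I expect is Step 1, i.e.\ Bernstein's inequality itself; everything else is bookkeeping with Chebyshev identities. If proving it from scratch is required, I would first try the zero-counting argument: suppose $|T'(\theta_0)|>d$, compare $T$ with a suitable $\lambda\sin(d(\theta-\theta_1))$, and count zeros of the difference to get more than $2d$ of them, which is impossible for a nonzero trigonometric polynomial of degree $d$.
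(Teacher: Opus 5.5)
Your argument is correct, but it is not the paper's route, because the paper gives no proof of this lemma. It imports Markov's inequality as a black box from~\cite{BE95} and proves only the rescaled version on $[0,1]$ (\cref{lemma:markov01}), via the substitution $x\mapsto (x+1)/2$.

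You instead give the classical self-contained derivation. Bernstein's trigonometric inequality handles the interior, and a Schur-type endpoint estimate via Lagrange interpolation at the zeros of $T_d$ handles the rest. The key details check out:
\begin{itemize}
\item For $x\ge x_1$ the signs are $\mathrm{sgn}\,\ell_j(x)=(-1)^{j-1}$, and $U_{d-1}(x_j)=(-1)^{j-1}/\sqrt{1-x_j^2}$.
\item Hence $\sum_j|\ell_j(x)|/\sqrt{1-x_j^2}=U_{d-1}(x)\le U_{d-1}(1)=d$.
\item The two regimes overlap because $\sin(\pi/(2d))\ge 1/d$.
\item The case $d=0$ and the removable $0/0$ in $\ell_j(x_1)$ are harmless.
\end{itemize}

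What the citation buys is brevity for a sharp classical fact; equality holds at $x=1$ for $p=T_d$. What your proof buys is transparency: everything reduces to Bernstein's inequality. Your zero-counting sketch for that step works directly for real trigonometric polynomials. This suffices here, since the paper only applies the lemma to the real polynomials $G_A$ and $G_A'$. For complex $p$ you would first rotate $T$ by a unimodular constant so that $T'(\theta_0)$ is real, then take real parts.

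Your argument also shows that the $d^2$ factor arises only in the endpoint regime; in the interior you have the stronger bound $|p'(x)|\le d/\sqrt{1-x^2}$. In the paper's application (\cref{lemma:bu2dbu}), the derivative $G_A'(0)$ is taken exactly at an endpoint. So your Step~2 regime is precisely what governs the $O(q^2)$ loss there, and the interior improvement would not help.
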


A simple rescaling of it gives a useful claim for us when working with polynomials in domain \([0,1]\). 

\begin{lemma}
For a polynomial \(p\) of degree at most \(d\) in \([0,1]\),
\[
\norm{p'}_\infty \le 2d^2 \norm{p}_\infty \, . \]
\label{lemma:markov01}
\end{lemma}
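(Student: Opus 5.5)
The plan is to reduce the statement to Markov's inequality (\cref{lemma:markov}) by an affine change of variables. Given a polynomial \(p\) of degree at most \(d\), consider it on \([0,1]\), and define \(q(s) := p\bigl(\tfrac{s+1}{2}\bigr)\) for \(s\in[-1,1]\). The map \(s\mapsto \tfrac{s+1}{2}\) is an affine bijection from \([-1,1]\) onto \([0,1]\). Consequently \(q\) is again a polynomial of degree at most \(d\), and \(\norm{q}_\infty = \norm{p}_\infty\), where the left sup-norm is over \([-1,1]\) and the right one over \([0,1]\).

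Next I apply the chain rule: \(q'(s) = \tfrac12\, p'\bigl(\tfrac{s+1}{2}\bigr)\). Taking suprema over \(s\in[-1,1]\), and again using that the substitution is onto \([0,1]\), gives \(\norm{q'}_\infty = \tfrac12 \norm{p'}_\infty\).

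Finally, \cref{lemma:markov} applied to \(q\) yields \(\tfrac12\norm{p'}_\infty = \norm{q'}_\infty \le d^2 \norm{q}_\infty = d^2\norm{p}_\infty\). Multiplying by \(2\) gives \(\norm{p'}_\infty \le 2d^2\norm{p}_\infty\) on \([0,1]\), as claimed.

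The only point requiring any care is bookkeeping: each sup-norm must be taken over the correct interval. The derivative picks up the factor \(1/2\) from the inner map, so the rescaling costs exactly a factor of \(2\). No real obstacle is expected. The argument also works verbatim for complex-coefficient polynomials, since \cref{lemma:markov} holds for them as well, which may matter for later applications to amplitudes.
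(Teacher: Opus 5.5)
Your proof is correct and matches the paper's: it substitutes \(q(s)=p\bigl(\tfrac{s+1}{2}\bigr)\), picks up the factor \(\tfrac12\) from the chain rule, and applies Markov's inequality on \([-1,1]\). Your proposal also spells out the bookkeeping the paper leaves implicit, namely that each sup-norm is taken over the interval matching its substitution.
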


\begin{proof}
    
For the interval \([0,1]\), define
\( q(x) :=p\!\left(\frac{x+1}{2}\right) \, . \)
Then \( q'(x)=\frac12p'\!\left(\frac{x+1}{2}\right), \) so Markov's inequality on \([-1,1]\) gives \(
\frac12\|p'\|_{\infty}
\le d^2\|p\|_{\infty} \, .
\)
\end{proof}

\begin{lemma}[Jensen's Inequality]
    For a convex function $f$ and a random variable $X$, 

    \begin{equation}
        \E[(f(X)] \ge f(\E [X]).
    \end{equation}
    \label{lemma:jensen's}
\end{lemma}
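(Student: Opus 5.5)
The plan is the standard supporting-line argument. Assume $f$ is convex on an interval $I$, that $X$ takes values in $I$ almost surely, and that $\E[X]$ is finite. Set $m:=\E[X]$; then $m\in I$, since $I$ is convex and $X\in I$ almost surely. The goal is an affine minorant of $f$ that touches $f$ at $m$. Concretely, I will find a slope $c\in\mathbb R$ with
\[
f(x)\ \ge\ f(m)+c\,(x-m)\qquad\text{for all } x\in I .
\]
Substituting $x=X$ and taking expectations then gives
\[
\E[f(X)]\ \ge\ f(m)+c\,(\E[X]-m)\ =\ f(m)=f(\E[X]),
\]
which is the claim.

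The main step is the existence of the supporting slope $c$, and I expect it to be the only real obstacle. First I prove the three-slope inequality. For $x<m<y$ in $I$, write $m$ as a convex combination of $x$ and $y$ and apply the definition of convexity. This yields
\[
\frac{f(m)-f(x)}{m-x}\ \le\ \frac{f(y)-f(m)}{y-m}.
\]
Hence every left difference quotient at $m$ is bounded above by every right difference quotient at $m$. Now take $c:=\sup_{x<m}\frac{f(m)-f(x)}{m-x}$, or any number between this supremum and the infimum of the right quotients. Rearranging the two quotient bounds gives the supporting inequality for $x<m$ and for $x>m$ separately; the case $x=m$ is trivial.

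The degenerate case is when $m$ is an endpoint of $I$, where one of the two families of quotients is empty. Then $X\ge m$ (or $X\le m$) almost surely together with $\E[X]=m$ forces $X=m$ almost surely, so the inequality holds with equality.

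It remains to check that $\E[f(X)]$ is well defined. The supporting line shows that $f(X)$ is bounded below by the integrable affine function $f(m)+c(X-m)$. Therefore $\E[f(X)]$ exists in $(-\infty,+\infty]$, and the inequality holds trivially if it equals $+\infty$. In this paper Jensen is applied to finitely supported averages, for instance over $t$, random phases, and keys. There the expectation is a finite convex combination, and one could alternatively argue directly by induction on the support size from the two-point definition of convexity.
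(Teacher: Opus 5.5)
The paper gives no proof of this lemma; it is quoted as a standard fact, so there is no argument of the paper's to compare against. Your supporting-line argument is the textbook proof and is correct for real-valued $X$ and $f$ convex on an interval $I$. The three-slope inequality, the choice of $c$ between the left and right difference quotients, the endpoint case (which forces $X=m$ almost surely), and the observation that $f(X)$ is bounded below by an integrable affine function are all handled properly.

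The caveat is in matching how the paper actually uses the lemma. It is invoked exactly once, to get $\norm{\E_z u_a(z)}^2\le\E_z\norm{u_a(z)}^2$. There $u_a(z)$ is a vector in a complex Hilbert space and $f=\norm{\cdot}^2$. Moreover, $z$ has the Poisson-kernel law (each $u_x$ uniform on the unit circle), so the average is not finitely supported. The $t$-average is treated by the continuous Cauchy--Schwarz lemma, not by Jensen, and keys never enter a Jensen step.

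So your closing remark about finitely supported averages is inaccurate, and your interval argument does not literally cover this use. The construction of $c$ via the three-slope inequality is one-dimensional. For a general convex $f$ on $\mathbb{R}^n$ you would need the supporting-hyperplane (subgradient) theorem instead.

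For the specific $f$ the paper needs, the fix is one line. Your supporting-line step goes through verbatim with the affine minorant $v\mapsto\norm{m}^2+2\,\mathrm{Re}\langle m,v-m\rangle$ in place of $f(m)+c(x-m)$. This works because $\norm{v}^2-\norm{m}^2-2\,\mathrm{Re}\langle m,v-m\rangle=\norm{v-m}^2\ge 0$. Alternatively, combine $\norm{\E Y}\le\E\norm{Y}$ with your scalar result applied to the real random variable $\norm{Y}$ and $s\mapsto s^2$.
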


\begin{lemma}[Cauchy-Schwarz Inequality; discrete] \label{lem:cauch-schwarz-discrete}
    For any two vectors $\mathbf{u}$ and $\mathbf{v}$ in an inner product space, 
    \[
        \|\mathbf{u}\cdot \mathbf{v}\| \leq \|\mathbf{u}\|\cdot\|\mathbf{v}\|.
    \]
\end{lemma}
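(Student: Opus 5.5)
The statement to prove is the discrete Cauchy--Schwarz inequality (\cref{lem:cauch-schwarz-discrete}): for vectors $\mathbf u,\mathbf v$ in an inner product space, $|\langle \mathbf u,\mathbf v\rangle|\le\|\mathbf u\|\,\|\mathbf v\|$. Here $\mathbf u\cdot\mathbf v$ denotes the inner product, and its ``norm'' is the modulus of a complex scalar. I would give the standard projection argument, which uses only positivity of the norm.

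First dispose of the trivial case. If $\mathbf v=0$, both sides vanish, since $\langle \mathbf u,0\rangle=0$ by linearity. So assume $\mathbf v\neq 0$ and set $\lambda:=\langle \mathbf v,\mathbf u\rangle/\|\mathbf v\|^2$. Here I take the inner product to be conjugate-linear in the first slot, matching the physics convention used with kets in this paper.

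Next expand $0\le\|\mathbf u-\lambda\mathbf v\|^2$ using sesquilinearity and conjugate symmetry $\langle \mathbf u,\mathbf v\rangle=\langle \mathbf v,\mathbf u\rangle^*$. The expansion gives
\[
\|\mathbf u\|^2-\lambda\langle \mathbf u,\mathbf v\rangle-\lambda^*\langle \mathbf v,\mathbf u\rangle+|\lambda|^2\|\mathbf v\|^2 .
\]
With the chosen $\lambda$, the last three terms combine to $-|\langle \mathbf u,\mathbf v\rangle|^2/\|\mathbf v\|^2$. Hence
\[
0\le \|\mathbf u\|^2-\frac{|\langle \mathbf u,\mathbf v\rangle|^2}{\|\mathbf v\|^2}.
\]
Rearranging and taking square roots yields the claim.

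The only delicate point is bookkeeping of complex conjugates in the cross terms, so that they genuinely combine into a single real negative term. Choosing $\lambda$ as the coefficient of the orthogonal projection of $\mathbf u$ onto $\mathbf v$ makes this automatic. One can also check it directly: $\lambda\langle \mathbf u,\mathbf v\rangle=|\langle \mathbf u,\mathbf v\rangle|^2/\|\mathbf v\|^2$ is already real. The same argument covers real spaces verbatim.
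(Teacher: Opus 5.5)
Your proof is correct. With $\lambda=\langle\mathbf v,\mathbf u\rangle/\|\mathbf v\|^2$, the three cross terms collapse to $-|\langle\mathbf u,\mathbf v\rangle|^2/\|\mathbf v\|^2$ under the conjugate-linear-in-the-first-slot convention, and positive definiteness justifies dividing by $\|\mathbf v\|^2$ when $\mathbf v\neq 0$.

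The paper gives no proof of this lemma; it is listed among standard facts in the preliminaries. Your projection argument is the standard one that would fill that gap.

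One small remark on how the paper uses it. In \cref{lemma:quarters} the form actually needed is $\bigl\|\sum_{a\in X}U_a(t)\bigr\|^2\le(q+1)\sum_{a\in X}\|U_a(t)\|^2$. This follows from the triangle inequality, then your inequality applied in $\mathbb R^{q+1}$ to $(1,\ldots,1)$ and $(\|U_a(t)\|)_{a\in X}$.
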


\begin{lemma}[Cauchy-Schwarz Inequality; continuous] \label{lem:cauch-schwarz-continuous}
    For functions $f, g$ integrable on $[a,b]$,
    \[
        \norm{\int_a^b f(x)\cdot g(x) dx}^2 \leq \int_a^b\norm{f(x)}^2 dx \int_a^b \norm{g(x)}^2 dx.
    \]
\end{lemma}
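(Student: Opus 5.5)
We prove the continuous Cauchy--Schwarz inequality (\cref{lem:cauch-schwarz-continuous}) in the generality in which the paper uses it. Here one factor may be vector-valued, e.g.\ $f$ a complex scalar function and $g$ taking values in a finite-dimensional Hilbert space (such as $H'(t)$), or both factors scalar. The plan has two reductions: first pass the norm inside the integral, then apply the scalar Cauchy--Schwarz inequality to the nonnegative functions $|f|$ and $\norm{g}$.

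\emph{Step 1 (norm of an integral).} Let $w:=\int_a^b f(x)g(x)\,dx$. If $w=0$ there is nothing to prove. Otherwise take the unit vector $e:=w/\norm{w}$. Since $\langle e,\cdot\rangle$ is linear and continuous, it commutes with the (componentwise Riemann) integral. Hence
\[
\norm{w}=\langle e,w\rangle=\int_a^b \langle e, f(x)g(x)\rangle\,dx \le \int_a^b |f(x)|\,\norm{g(x)}\,dx .
\]
The last step uses $\mathrm{Re}\langle e,f g\rangle\le |f|\,\norm{g}$, which is the discrete \cref{lem:cauch-schwarz-discrete} applied pointwise, and the fact that the left side is real.

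\emph{Step 2 (scalar Cauchy--Schwarz).} Set $\alpha:=|f|\ge 0$ and $\gamma:=\norm{g}\ge 0$; both are integrable. For every real $s$,
\[
0\le \int_a^b (s\alpha-\gamma)^2\,dx = s^2\!\int\alpha^2 - 2s\!\int\alpha\gamma + \int\gamma^2 .
\]
If $\int\alpha^2>0$, this is a nonnegative real quadratic in $s$, so its discriminant is nonpositive: $\left(\int\alpha\gamma\right)^2\le \int\alpha^2\int\gamma^2$. If $\int\alpha^2=0$, letting $s\to\infty$ forces $\int\alpha\gamma\le 0$, hence $\int\alpha\gamma=0$, and the inequality holds trivially. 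Squaring the bound from Step 1 (both sides are nonnegative) and combining it with this gives the claim.

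The only mildly delicate point is Step 1, namely justifying that the norm of a vector-valued integral is at most the integral of the norm. The duality argument with the single unit vector $e$ avoids any appeal to Bochner-integral theory: in finite dimensions, linear functionals commute with the componentwise integral by linearity of the scalar integral. Everything else is the standard discriminant argument.
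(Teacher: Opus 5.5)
The paper gives no proof of this lemma. It is listed among the standard tools in the preliminaries, so there is no argument of the paper's to compare yours against.

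Your proof is correct and is the standard one. The duality step with $e=w/\|w\|$ yields $\|\int_a^b f g\,dx\|\le\int_a^b |f|\,\|g\|\,dx$ without any vector-valued integration theory. The discriminant argument, including your separate handling of $\int_a^b |f|^2\,dx=0$, then finishes it.

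Your reading of $f(x)\cdot g(x)$ as a scalar times a vector is exactly what the paper's only application needs. There $f\equiv 1$ and $g=H'$ on $[0,1]$, giving $\|\int_0^1 H'(t)\,dt\|^2\le\int_0^1\|H'(t)\|^2\,dt$. If one instead reads the product as an inner product of two vector-valued functions, by analogy with the discrete lemma, the same proof works with Step 1 replaced by $|\int_a^b\langle f,g\rangle\,dx|\le\int_a^b\|f\|\,\|g\|\,dx$.

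For that application alone there is an even shorter route: $0\le\int_0^1\|H'(t)-v\|^2\,dt=\int_0^1\|H'(t)\|^2\,dt-\|v\|^2$, using $\int_0^1 H'(t)\,dt=v$. Your general argument is what the lemma as stated calls for.
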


\subsection{Quantum Oracles and Query Algorithms}
\label{sec:quantum}

\paragraph{Query histories.}
Write the $q$-query algorithm as
\[
    A=A_q O_F A_{q-1}\cdots O_F A_0.
\]
Let $P_x$ denote the projector onto query address $x$.  Inserting
$I=\sum_x P_x$ before each oracle call gives the exact coherent
decomposition
\[
    A
    =
    \sum_{h=(x_1,\ldots,x_q)}
    A_q O_F P_{x_q}A_{q-1}\cdots O_F P_{x_1}A_0.
\]
We call $h=(x_1,\ldots,x_q)$ a \emph{query history}.  This is only an
analytic decomposition: the reduction never measures or records the query
addresses, and the corresponding history amplitudes remain coherent and may
interfere.  
\begin{definition}[History statistics]
\label{def:history-statistics}
Let $h=(x_1,\ldots,x_q)$ be a query history over the input domain
$\mathcal X$.  Its \emph{query multiplicity profile} is
\[
    \nu_h(x)
    :=
    \bigl|\{i\in[q]:x_i=x\}\bigr|,
    \qquad x\in\mathcal X.
\]
We say that $h$ \emph{touches} $x$ if $\nu_h(x)>0$.

Let $X\subseteq\mathcal X$ be a set of $q+1$ distinct reported inputs
associated with a successful history.  The \emph{omitted-output set} is
\[
    M_X(h)
    :=
    \{x\in X:\nu_h(x)=0\}.
\]
Finally, define the \emph{signed deficit vector}
\[
    \lambda_{h,X}
    :=
    \mathbf 1_X-\nu_h,
\]
where $\mathbf 1_X$ denotes the indicator vector of $X$.
Equivalently,
\[
    \lambda_{h,X}(x)
    =
    \begin{cases}
        1-\nu_h(x), & x\in X,\\
        -\nu_h(x),  & x\notin X.
    \end{cases}
\]
\end{definition}

\begin{lemma}[Deficit identities]
\label{lem:deficit-identities}
Let $h$ be a $q$-query history and let $X$ contain $q+1$ distinct
reported inputs. Then
\[
    M_X(h)\neq\varnothing,
    \qquad
    \lambda_{h,X}(x)>0
    \iff
    x\in M_X(h),
\]
and
\[
    \sum_{x\in\mathcal X}\lambda_{h,X}(x)=1,
    \qquad
    \|\lambda_{h,X}\|_1
    =
    2|M_X(h)|-1.
\]
\end{lemma}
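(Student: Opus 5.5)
The identities follow from two facts. The multiplicity profile of a $q$-query history has total mass $\sum_{x\in\mathcal X}\nu_h(x)=q$, since each of the $q$ positions of $h=(x_1,\ldots,x_q)$ contributes exactly one unit to exactly one coordinate. And $\nu_h(x)$ is a nonnegative integer. I would verify the four claims in the order stated, using only these facts and $|X|=q+1$.

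\emph{Nonemptiness of $M_X(h)$.} This is the pigeonhole count. The set of inputs touched by $h$ is $\{x:\nu_h(x)>0\}$, and it has at most $q$ elements because $\sum_x\nu_h(x)=q$ and each touched point has $\nu_h(x)\ge 1$. Since $|X|=q+1>q$, some $x\in X$ has $\nu_h(x)=0$, so $M_X(h)\neq\varnothing$.

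\emph{Sign characterization.} I split on whether $x\in X$. If $x\notin X$, then $\lambda_{h,X}(x)=-\nu_h(x)\le 0$, and also $x\notin M_X(h)$, so both sides of the equivalence are false. If $x\in X$, then $\lambda_{h,X}(x)=1-\nu_h(x)$. By integrality this is positive exactly when $\nu_h(x)=0$, which is exactly the condition $x\in M_X(h)$.

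\emph{Sum identity.} By linearity,
\[
\sum_{x\in\mathcal X}\lambda_{h,X}(x)=\sum_{x}\mathbf 1_X(x)-\sum_x\nu_h(x)=(q+1)-q=1 .
\]

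\emph{The $\ell_1$ identity.} The only step needing any care is to write $\|\lambda\|_1=\sum_x\lambda(x)-2\sum_{x:\lambda(x)<0}\lambda(x)$, or equivalently $\|\lambda\|_1=2\sum_{x:\lambda(x)>0}\lambda(x)-\sum_x\lambda(x)$. By the sign characterization, the positive coordinates are exactly those in $M_X(h)$, and each equals $1-0=1$. So the positive part sums to $|M_X(h)|$, and
\[
\|\lambda_{h,X}\|_1=2|M_X(h)|-\sum_x\lambda_{h,X}(x)=2|M_X(h)|-1 .
\]
Coordinates with $x\in X$ and $\nu_h(x)=1$ are zero and do not affect either sum.

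There is no real obstacle here. The one point to handle explicitly is integrality of $\nu_h$: it makes $1-\nu_h(x)>0$ equivalent to $\nu_h(x)=0$, and it makes every positive coordinate equal to exactly $1$.
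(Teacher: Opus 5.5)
Your proof is correct and takes essentially the same route as the paper: pigeonhole for $M_X(h)\neq\varnothing$, a case split on $x\in X$ for the sign characterization, linearity for the signed sum, and the positive-mass/negative-mass bookkeeping $\|\lambda_{h,X}\|_1 = 2\sum_{\lambda_{h,X}(x)>0}\lambda_{h,X}(x) - \sum_x\lambda_{h,X}(x)$ for the $\ell_1$ identity. Your explicit appeal to the integrality of $\nu_h$ makes precise a step that the paper uses without comment.
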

\begin{proof}
Since $h$ contains only $q$ queries while $X$ contains $q+1$ distinct
points, at least one point of $X$ is not touched, proving
$M_X(h)\neq\varnothing$.

Moreover, $\lambda_{h,X}(x)>0$ can occur only when $x\in X$ and
$\nu_h(x)=0$, in which case $\lambda_{h,X}(x)=1$.  Hence the positive
coordinates of $\lambda_{h,X}$ are exactly those indexed by $M_X(h)$.

Since
\[
    \sum_x \nu_h(x)=q
    \qquad\text{and}\qquad
    \sum_x \mathbf 1_X(x)=|X|=q+1,
\]
we have
\[
    \sum_x \lambda_{h,X}(x)=1.
\]

The total positive mass of $\lambda_{h,X}$ is $|M_X(h)|$.  Since its
signed sum is $1$, its total negative mass in absolute value is
$|M_X(h)|-1$.  Therefore
\[
    \|\lambda_{h,X}\|_1
    =
    |M_X(h)|
    +
    \bigl(|M_X(h)|-1\bigr)
    =
    2|M_X(h)|-1.
\]
\end{proof}

The coherent history decomposition lets us isolate, for each candidate
reported point $a$, the contribution of those histories that never touch
$a$.  We then retain only successful outputs that contain $a$ and are valid
under the secret key.  Aggregating these contributions over $a$ gives what we call the
singleton zero-touch mass.

\begin{definition}[Singleton zero-touch mass]
    The filter $E_{a,0}$ extracts the sub-portion of $A$ for which $\nu_h(a) = 0$, throwing out any histories in which $a$ is a query.  The filter $\Pi_{k,a}$ projects the outputs of $A$ such that every single output-forgery pair is valid under the key $k$, and that $a$ belongs to the output set $X$. Using these, we define the \emph{singleton zero-touch mass} of $A$ as follows: 
    \begin{equation*}
        \Xi(A) := \E_k \sum_{a\in \X} \norm{\Pi_{k,a} E_{a,0} A^{F_k} \ket0}.
    \end{equation*}
\end{definition}

%% file: bu-equiv-spdu.tex
\subsection{Formulating Singleton-Puncture Density Unforgeability}
\label{sec:spdu}

Let $F=\{F_k:\X_\secpar\to\Y_\secpar\}_{k\in \K_\secpar}$ be an efficiently computable deterministic keyed family, accessed through its
standard quantum unitary oracle.  For \(a\in\X_\secpar\), let
\[
 F_k^{\setminus a}(x)=
 \begin{cases}
  \bot,&x=a \, ,\\
  F_k(x),&x\ne a \, .
 \end{cases}
\]

\begin{definition}[Singleton-puncture density unforgeability]
For a quantum algorithm \(A\), define
\[
 p_{k,a}(A)
 :=
 \Prb\!\left[A^{F_k^{\setminus a}}
      \text{ outputs }(a,F_k(a))\right]
\]
and
\begin{equation}
 \rho_F(A)
 :=
 \E_k\sum_{a\in\X_\secpar}p_{k,a}(A) \, .
 \label{eq:density}
\end{equation}
The family \(F\) is \emph{density unforgeable under singleton puncture}, or
\(\dBU\)-secure, if \(\rho_F(A)\) is negligible for every QPT \(A\).
\end{definition}

We will prove that \(\dBU\) gives an equivalent characterization of \(\BU\), and the two directions are shown in \cref{sec:dbu2bu} and \cref{sec:bu2dbu} respectively. 
\begin{theorem}For every efficiently computable keyed family \(F\),
\[
 F\text{ is }\BU\text{-secure}
 \quad\Longleftrightarrow\quad
 F\text{ is }\dBU\text{-secure}.
\]
\label{thm:bueqdbu}
\end{theorem}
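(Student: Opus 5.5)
Both directions go through the function $g(\beta):=\Pr_{k,B\xleftarrow{\beta}\X}\bigl[A^{F_k^{\setminus B}}\text{ outputs }(x,F_k(x))\text{ with }x\in B\bigr]$ for a fixed $q$-query algorithm $A$. The plan is to write $g$ as a polynomial-type object in $\beta$ and relate its first-order behavior at $\beta=0$ to the singleton-puncture density.

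\emph{Direction $\dBU\Rightarrow\BU$.} Here we show that a $\BU$ attacker yields large singleton density. Fix $k$ and the output point $a$, and expand the success probability over blinding sets $B\ni a$. By the history decomposition of \cref{sec:quantum}, the final state depends on $B$ only through $B\cap\{\text{queried points}\}$. We would rather use a hybrid: compare running $A$ with $F_k^{\setminus B}$ against running it with $F_k^{\setminus a}$, conditioned on $a\in B$. The difference is controlled by the query mass on $B\setminus\{a\}$. Taking $\beta$ to be the $\BU$ rate, the resulting bound has the form $\Pr[\BU\text{ win}]\le\sum_a\beta\,p_{k,a}(A)+(\text{error})$, but the error need not be small at a noticeable $\beta$.

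A cleaner route is a direct reduction. A $\dBU$ adversary $C$ samples $B'\xleftarrow{\beta}\X$ itself, using a $2q$-wise independent function so that this is efficient and indistinguishable for $q$ queries (Zhandry). It then simulates $F_k^{\setminus(B'\cup\{a\})}$ using its own oracle $F_k^{\setminus a}$, and runs $A$. Since $B'\cup\{a\}$ has exactly the distribution of $B$ conditioned on $a\in B$, we get
\[
\sum_a p_{k,a}(C)\;\ge\;\sum_a\Pr[A\text{ wins with output }a\mid a\in B]\;=\;\beta^{-1}\Pr[\BU\text{ win}],
\]
which is noticeable whenever $\beta\ge 1/\poly$. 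If $\beta$ is negligible, the $\BU$ win probability is bounded by $\beta$ times the same quantity, so the argument still goes through. This direction should be routine.

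\emph{Direction $\BU\Rightarrow\dBU$.} Given $C$ with noticeable $\rho(C)$, we use $C$ itself as the $\BU$ adversary. Write $g(\beta)=\sum_{B}\beta^{|B|}(1-\beta)^{|\X|-|B|}w(B)$. Each $w(B)$ is determined by the at most $q$ queried points plus the single output point. Using the history expansion of $\|\cdot\|^2$, a pair of histories $(h,h')$ together with the output point depends on at most $2q+1$ coordinates of $B$. Hence $g$ is a polynomial in $\beta$ of degree at most $d=2q+1$. It satisfies $g(0)=0$, and $g'(0)=\sum_a\Pr[\text{output }(a,F(a))\text{ under }F^{\setminus\{a\}}]=\rho(C)$, because the first-order terms come exactly from $|B|=1$. \cref{lemma:markov01} then gives $\rho(C)=g'(0)\le 2d^2\max_{\beta\in[0,1]}g(\beta)$. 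So some $\beta^\star$ has $g(\beta^\star)\ge\rho(C)/(2(2q+1)^2)$, which is noticeable, and $C$ breaks $\BU$ at rate $\beta^\star$.

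The main obstacle is the degree claim. The oracle for a Bernoulli-blinded function over an exponentially large domain is not obviously polynomial in $\beta$ after averaging. We would establish it by expanding the output probability as a double sum over history pairs $(h,h')$ and output $x$. Each term is a product of indicators on at most $2q+1$ points of $B$, and its expectation is $\beta^{j}(1-\beta)^{m}$ with $j+m\le 2q+1$. Summing these terms gives the polynomial. A second subtlety is that $\BU$ fixes $\beta$ in advance, so we would guess $\beta^\star$ from a polynomial grid. This costs only a constant factor, because $|g'|\le 2d^2\|g\|_\infty$ lets us locate a grid point of spacing $1/(4d^2)$ at which $g\ge\|g\|_\infty/2$.
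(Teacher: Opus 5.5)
Your proposal is correct and follows the paper's proof. The \(\dBU\Rightarrow\BU\) direction is the same simulation of \(F_k^{\setminus(D\cup\{a\})}\) with a \(2q\)-wise independent mask, giving \(\rho_F(C)=G_A(\beta)/\beta\); this is at least \(G_A(\beta)\) for every \(\beta\in(0,1]\), so your condition \(\beta\ge 1/\poly\) is not actually needed. The \(\BU\Rightarrow\dBU\) direction is the same argument: \(G\) is a polynomial of degree at most \(2q+1\) with \(G(0)=0\) and \(G'(0)=\rho_F\), followed by Markov's inequality. Your history-pair expansion only rephrases the paper's affine-per-query multilinearization, and your grid remark plays the role of the paper's explicit rate \(\beta_0=\rho_F(A)/(4(2q+1)^4)\).
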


In the remainder of this subsection, we make a few remarks to explain and justify the choices in this definition, which will be useful in the equivalence proof as well. We first describe some algebraic structure of a \(q\)-query quantum algorithm in a \(\BU\) game and its singleton-puncture density.

\begin{proposition}
Let \(A\) be a \(q\)-query algorithm in the \(\BU\) game. Define 
\[ G_{A,k} (\beta) : = \Pr[ A^{F_k^{{\setminus B}}} \text{ wins } \BU: B\gets \beta(X_\secpar)] \, .\]
Then \( G_A(\beta) : = \E_k\!\left[G_{A,k}(\beta)\right]\) is a polynomial in \(\beta\) of degree at most \(d = 2q+1 \), and moreover
\[ G_A(0)=0, \qquad
 G_A'(0)=\rho_F(A) \, .\]
\label{prop:polychar}
\end{proposition}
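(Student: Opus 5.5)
The plan is to expand the $\BU$ success probability over the blinding set $B$, fixing a key $k$ throughout. Write $G_{A,k}(\beta)=\sum_{B\subseteq\X_\secpar}\beta^{|B|}(1-\beta)^{|\X_\secpar|-|B|}\,w_k(B)$, where $w_k(B):=\Pr[A^{F_k^{\setminus B}}\text{ outputs }(x,F_k(x))\text{ with }x\in B]$. As written this has huge degree in $\beta$, so the degree bound cannot come from the Bernoulli expansion directly. Instead I will purify the blinding. For each $x$, prepare an independent qubit $\sqrt{1-\beta}\ket0+\sqrt\beta\ket1$ and implement the oracle as a controlled unitary: it answers $F_k(x)$ when the qubit of $x$ is $\ket0$ and $\bot$ when it is $\ket1$. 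Tracing out these qubits reproduces the Bernoulli mixture exactly. This oracle is in fact ``lazy sampleable'', but a cleaner route is the following.

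Rewrite each oracle call as $O_{F_k^{\setminus B}}=O_{F_k}+\sum_{x\in B}D_x$, where $D_x:=(O_{\bot}-O_{F_k})P_x$ acts only on address $x$. Then $A^{F_k^{\setminus B}}\ket0$ expands as a sum over history-labelled terms, and each term depends on $B$ only through indicators $\mathbf 1[x_i\in B]$ for the at most $q$ addresses it uses. The win event needs one further indicator, $\mathbf 1[x\in B]$ for the output $x$. Hence the success probability, written as $\langle\psi_B|\Pi_B|\psi_B\rangle$, is a multilinear function of the indicators $b_x=\mathbf 1[x\in B]$ with monomials of degree at most $q+q+1=2q+1$: $q$ from the bra, $q$ from the ket, and $1$ from the projector $\Pi_B$ onto ``output $x$ with $b_x=1$ and correct tag''. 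Taking the expectation under the Bernoulli product measure turns each monomial $\prod_{x\in S}b_x$ into $\beta^{|S|}$ with $|S|\le 2q+1$, so $G_{A,k}$, and then $G_A$ after averaging over $k$, is a polynomial of degree at most $2q+1$.

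The main obstacle is making this multilinear expansion rigorous on an infinite-looking sum. I would handle it by inserting $\sum_x P_x$ before each query as in the history decomposition of \cref{sec:quantum}. A term's $B$-dependence is then a product of factors $(1-b_{x_i})O_{F_k}P_{x_i}+b_{x_i}O_\bot P_{x_i}$, which is linear in each $b_{x_i}$. Products of repeated indicators collapse via $b_x^2=b_x$, and this only lowers the degree.

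For the constants, at $\beta=0$ the set $B$ is empty almost surely, so winning is impossible and $G_A(0)=0$. For the derivative, $G_A'(0)$ equals the coefficient of $\beta$. Only the events $|B|\le1$ contribute to first order: $G_{A,k}(\beta)=\beta(1-\beta)^{N-1}\sum_a w_k(\{a\})+O(\beta^2)$, where $N=|\X_\secpar|$ and $w_k(\emptyset)=0$. The degree bound justifies reading off coefficients even when $N$ is exponential, because the polynomial identity can equally be obtained by truncating the multilinear expansion: monomials with $|S|\ge2$ contribute $O(\beta^2)$. Finally $w_k(\{a\})$ is exactly $p_{k,a}(A)$, since with $B=\{a\}$ the only winning output is $(a,F_k(a))$. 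Summing over $a$ and taking $\E_k$ gives $G_A'(0)=\rho_F(A)$.
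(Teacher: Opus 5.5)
Your proposal is correct and follows essentially the same route as the paper. Each blinded query is affine in the indicators $b_x$, so amplitudes are multilinear of degree at most $q$ and the winning probability $\sum_x b_x\,p_{k,x,F_k(x)}$ is multilinear of degree at most $2q+1$, whose Bernoulli average is obtained by substituting $b_x=\beta$; $G_A'(0)$ is then read off from the singleton terms of the expansion over $B$, exactly as in the paper. The purification aside and the history-decomposition justification are harmless extras, and the concern about exponential $N$ is unnecessary, since for finite $N$ the first-order expansion over $B$ is already exact.
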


The proof is reminiscent of the analysis of the quantum polynomial method~\cite{BBCMdW01} and we defer it to~\cref{sec:polychar}. 

\begin{remark}
\cref{prop:polychar} explains the symbol \(\partial\), which indicates that \(\rho_F(A)\) is a derivative rather
than a probability. In other words, we can view \(\rho_F(A)\) as the first-order ``sensitivity'' of \(\BU\) success to the blind density.
\end{remark}

\begin{remark}[Why not normalize by a uniform puncture?]
It is tempting to think of \(\rho_F(A)\) as an \emph{unnormalized} success of winning \(\BU\) conditioned on exactly one point being blind. Namely, let \(N_\secpar=|\X_\secpar|\). If the challenger samples a hidden uniform
\(a\leftarrow\X_\secpar\), gives \(F_k^{\setminus a}\), and declares success
when \(A\) outputs \((a,F_k(a))\). We may call this singleton-blind unforgeability (SBU) game, and denote the success probability as \(\operatorname{Adv}^{\mathrm{SBU}_F}(A)\), which is
\[
 \operatorname{Adv}^{\mathrm{SBU}_F}(A)
 =\frac{\rho_F(A)}{N_\secpar}\, .
\]
The normalization factor \(\frac{1}{N_\secpar}\) is harmless when \(N_\secpar\) is polynomial, but on an
exponential domain a noticeable \(\rho_F(A)\) becomes an
exponentially small success probability.  Indeed, if \(A\) makes \(q\)
queries, then ~\cref{prop:polychar} tells us that \(G_A(\beta)\) has degree at most \(2q+1\) in \(\beta\) and \(\rho_F(A) = G_A^{'}(0)\), and Markov's inequality~(\cref{lemma:markov01})
gives
\[
 0\le \rho_F(A)\le 2(2q+1)^2 \, .
\]
Consequently
\[
 \operatorname{Adv}^{\mathrm{SBU}_F}(A)
 \le \frac{2(2q+1)^2}{N_\secpar} \, ,
\]
regardless of the function family. The unnormalized density in
\eqref{eq:density}, not the uniform-singleton probability, is therefore the
correct intermediate object that faithfully captures the information needed to show \(\BU\Rightarrow\mathsf{PO}\).
\end{remark}

\subsection{\(\dBU\) implies \(\BU\)}
\label{sec:dbu2bu}

\begin{lemma}[\(\dBU \implies \BU\)]
Let \(A\) make \(q\) queries in the BU game with Bernoulli blinding rate 
\(\beta \in (0,1)\).  There is an efficient singleton-puncture algorithm
\(C_{A,\beta}\), making at most \(2q\) queries, such that
\begin{equation}
 \rho_F(C_{A,\beta})=\frac{G_A(\beta)}{\beta}.
 \label{eq:palm}
\end{equation}
\end{lemma}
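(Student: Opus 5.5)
The plan is to build \(C_{A,\beta}\) by simulating the Bernoulli blinding itself, leaving the true puncture \(a\) to come from the given oracle. It rests on a ``Palm'' decomposition of \(G_A(\beta)\) according to which blinded point the forgery hits.

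\emph{Step 1: Palm decomposition.} Since \(A\) outputs a single pair \((x,y)\), the event ``\(A\) wins \(\BU\)'' is the disjoint union over \(a\in\X_\secpar\) of the events ``\(a\in B\) and \(A\) outputs \((a,F_k(a))\)''. For each fixed \(a\) the Bernoulli product structure gives \(\Pr[a\in B]=\beta\). Conditioned on \(a\in B\), we have \(B=\{a\}\cup B'\), where \(B'\) is an independent Bernoulli\((\beta)\) subset of \(\X_\secpar\setminus\{a\}\). Hence
\[
 G_A(\beta)=\beta\,\E_k\sum_{a}\E_{B'}\Pr\!\left[A^{F_k^{\setminus(\{a\}\cup B')}}\text{ outputs }(a,F_k(a))\right].
\]
It therefore suffices to exhibit \(C_{A,\beta}\) such that, for every \(k\) and \(a\), \(p_{k,a}(C_{A,\beta})\) equals the inner expectation. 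Dividing by \(\beta>0\) then gives \eqref{eq:palm}.

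\emph{Step 2: the construction.} \(C_{A,\beta}\) is given the oracle \(F_k^{\setminus a}\) but does not know \(a\). It samples \(B'\subseteq\X_\secpar\) with i.i.d.\ rate \(\beta\) over the whole domain and runs \(A\), answering each query through the composite oracle
\[
 x\mapsto \begin{cases}\bot,&x\in B',\\ F_k^{\setminus a}(x),&x\notin B',\end{cases}
\]
and finally outputs whatever \(A\) outputs. Pointwise this composite map equals \(F_k^{\setminus(\{a\}\cup B')}\). Whether or not \(a\in B'\) is irrelevant, since \(a\) is already blinded. Moreover the marginal of \(B'\setminus\{a\}\) is exactly Bernoulli\((\beta)\) on \(\X_\secpar\setminus\{a\}\), so the output distribution matches Step 1 exactly.

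\emph{Step 3: coherent implementation.} Each simulated query computes the blinding bit \(b(x)=\mathbf 1[x\in B']\) into an ancilla. Conditioned on \(b(x)=0\), it calls the \(F_k^{\setminus a}\) oracle, which writes \(\bot\) or the value into the response register; conditioned on \(b(x)=1\), it writes \(\bot\). It then uncomputes \(b(x)\). Any scratch copy of the oracle answer used to decide what to write into the response register (for example to coherently merge the \(\bot\) symbol) must also be uncomputed. That is where a second oracle call per simulated query is spent, giving at most \(2q\) queries in total.

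\emph{Main obstacle: efficiency of sampling \(B'\).} On an exponential domain \(B'\) cannot be stored explicitly. I would replace the truly random indicator \(b\) by a \(2q\)-wise independent function with the appropriate \(\beta\)-biased marginals, for example by composing a \(2q\)-wise independent hash into a range of size \(\approx 1/\beta\). I would then invoke the fact (Zhandry) that a \(q\)-query quantum algorithm's output distribution is identical under a \(2q\)-wise independent function and a truly random one. The success probability is a fixed polynomial in the oracle's values of degree at most \(2q\), so the identity is exact, not approximate. If \(\beta\) is not exactly representable, I would take \(\beta\) from a discretized grid (or accept a negligible error). Since the lemma asserts equality, the cleanest route is to state it for such representable \(\beta\).
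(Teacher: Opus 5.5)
Your proposal is correct and is essentially the paper's proof. $C_{A,\beta}$ further blinds the punctured oracle with a $2q$-wise independent $\beta$-biased mask, using two oracle calls per simulated query, and the identity $\rho_F(C_{A,\beta})=G_A(\beta)/\beta$ is exactly the regrouping $\E_k\sum_a\sum_{B\ni a}\beta^{|B|-1}(1-\beta)^{N-|B|}p_A(k,B,a)=G_A(\beta)/\beta$ that the paper writes out. Sampling $B'$ over the whole domain (since $C$ does not know $a$) and noting that exact equality needs a representable $\beta$ are small sharpenings of the paper's wording, not a different route.
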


\begin{proof}
For fixed \(k,B,a\), let
\[
 p_A(k,B,a)
 :=\Prb\!\left[A^{F_k^{\setminus B}}
               \text{ outputs }(a,F_k(a))\right].
\]
Given the singleton-punctured oracle \(F_k^{\setminus a}\), the algorithm
\(C_{A,\beta}\) samples a blinding set
\(D \) by including every \(x\in \X_\secpar\setminus\{a\}\) in \(D\) with probability \(\beta\) \emph{independently}, and presents \(A\) with the further-blinded
oracle \(F_k^{\setminus(D\cup\{a\})}\). This can be implemented coherently and efficiently: query \(F_k^{\setminus a}(x)\) with fresh ancilla as the output register, 
compute a mask bit \(b_x : = x \overset{?}{\in} {D} \cup \{a\}\) in another ancilla qubit, then write either \(\bot\) or the function value into \(A\)'s answer
register controlled by all the ancilla, and then uncomputing the ancilla.  This uses two calls to the base oracle \(F_k^{\setminus a}(x)\).
Since \(\) is a \(q\)-query algorithm, $D$ can be sampled by a \(2q\)-wise independent hash function. In the end, \(C_{A,\beta}\) returns \(A\)'s output.  We can see that 
\begin{align*}
 \rho_F(C_{A,\beta})
 &=\E_k\sum_a\sum_{B\ni a}
    \beta^{|B|-1}(1-\beta)^{N_\secpar-|B|}p_A(k,B,a)\\
 &=\frac1\beta\E_k\sum_B
    \beta^{|B|}(1-\beta)^{N_\secpar-|B|}
    \sum_{a\in B}p_A(k,B,a)\\
 &=\frac{G_A(\beta)}\beta.
\end{align*}
Therefore, whenever \(A\) breaks \(\BU\) with success \(G_A(\beta) \ge \frac{1}{\poly(\secpar)}\), \(\rho_F(C_{A,\beta}) \ge G_A(\beta) /\beta \ge \frac{1}{\poly(\secpar)}\) as a result for \(\beta \in (0,1)\). 
\end{proof}

\subsection{\(\BU\) implies \(\dBU\)}
\label{sec:bu2dbu}
\begin{lemma}[\(\BU \implies \dBU\)]
\label{lemma:bu2dbu}
Let \(A\) be a \(q\)-query algorithm in the \(\dBU\) game, and let
\(\rho_F(A)\) denote its singleton-puncture density.  Let
\[
    G_A(\beta)
    :=
    \Pr\!\left[A \text{ wins the } \BU
        \text{ game with blinding rate } \beta\right].
\]
Then
\[
    \sup_{\beta\in[0,1]} G_A(\beta)
    \ge
    \frac{\rho_F(A)}{2(2q+1)^2}.
\]
Moreover, for the explicit choice
\[
    \beta_0
    :=
    \frac{\rho_F(A)}{4(2q+1)^4},
\]
we have
\[
    G_A(\beta_0)
    \ge
    \frac{\rho_F(A)^2}{8(2q+1)^4}.
\]
Thus noticeable singleton-puncture density implies noticeable
\(\BU\) advantage; the first bound incurs only an \(O(q^2)\) loss,
while the second gives an explicit, though weaker, blinding rate.
\end{lemma}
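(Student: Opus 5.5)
We use the polynomial structure from \cref{prop:polychar}: \(G_A\) is a polynomial in \(\beta\) of degree at most \(d=2q+1\), with \(G_A(0)=0\) and \(G_A'(0)=\rho_F(A)\). Both bounds then follow from elementary facts about polynomials on \([0,1]\). Throughout, write \(\rho:=\rho_F(A)\) and \(S:=\sup_{\beta\in[0,1]}|G_A(\beta)|\). Since \(G_A(\beta)\) is a probability for every \(\beta\in[0,1]\), we have \(0\le G_A\le 1\) there, so \(S=\sup_{\beta\in[0,1]}G_A(\beta)\).

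\emph{The first bound.} We apply the rescaled Markov inequality (\cref{lemma:markov01}) to \(G_A\) on \([0,1]\), which gives
\[
 \rho=G_A'(0)\le \norm{G_A'}_\infty\le 2d^2 S .
\]
Rearranging yields \(\sup_\beta G_A(\beta)\ge \rho/(2(2q+1)^2)\). This step is immediate.

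\emph{The explicit rate.} Here we use a second-order Taylor argument with remainder. We apply Markov's inequality to \(G_A'\), a polynomial of degree at most \(d-1\), together with the bound just obtained on \(G_A'\):
\[
 \norm{G_A''}_\infty\le 2(d-1)^2\norm{G_A'}_\infty\le 4d^4 S\le 4d^4 .
\]
Taylor's theorem with Lagrange remainder at \(0\) then gives, for \(\beta\in[0,1]\),
\[
 G_A(\beta)\ge \rho\beta-\tfrac12\cdot 4d^4\beta^2=\rho\beta-2d^4\beta^2 .
\]
We choose \(\beta_0=\rho/(4d^4)\), which is the maximizer of this quadratic. It lies in \([0,1]\) because \(\rho\le 2d^2\). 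Substituting,
\[
 G_A(\beta_0)\ge \frac{\rho^2}{4d^4}-\frac{\rho^2}{8d^4}=\frac{\rho^2}{8(2q+1)^4},
\]
which is the claimed bound.

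\emph{Main obstacle.} The only points requiring care are the following.
\begin{enumerate}
\item \(S\le 1\) must hold on all of \([0,1]\). This is true because each \(G_{A,k}(\beta)\) is a genuine probability for every \(\beta\in[0,1]\), including \(\beta=1\), where the oracle is fully blinded.
\item The degree bound for \(G_A'\) is \(d-1\); using \(d\) instead only weakens the constant harmlessly.
\item The Markov inequalities must be applied in sup-norm over \([0,1]\), matching the domain of \cref{lemma:markov01}.
\end{enumerate}
The substantive content, that \(G_A\) is a low-degree polynomial with derivative \(\rho\) at \(0\), is supplied by \cref{prop:polychar}.
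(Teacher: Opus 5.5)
Your proposal is correct and follows essentially the same route as the paper. Both arguments apply the rescaled Markov inequality (\cref{lemma:markov01}) to $G_A$ for the first bound. They then apply it again to $G_A'$ to bound $G_A''$, and finish with a second-order Taylor expansion at $0$ evaluated at $\beta_0=\rho_F(A)/(4(2q+1)^4)$. Your remark that $0\le G_A\le 1$ on all of $[0,1]$, so that $\sup G_A=\|G_A\|_\infty$, makes explicit a step the paper uses implicitly.
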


\begin{proof} Apply Taylor's formula to \(G_A(\beta)\), we have that  
\[G_A(\beta) = G_A(0) + G_A^{'}(0)\cdot \beta + \frac{1}{2}G_A^{''}(\xi)\cdot \beta^2 \]
for some \(\xi \in (0, \beta)\). 

Notice that \(G_A(0) = 0\), and \(G_A^{'}(0) = \rho_F(A)\). Since \(G_A(\beta)\) is a polynomial of degree at most \( 2q+1\), Markov's inequality~\cref{lemma:markov01} gives that 
\[ \norm{G_A^{'}(\beta)}_\infty \le 2(2q+1)^2 \norm{G_A(\beta)}_\infty \le  2(2q+1)^2 \, .\]

Hence
\[
    \sup_{\beta\in[0,1]}G_A(\beta)
    =
    \|G_A\|_\infty
    \ge
    \frac{\rho_F(A)}{2(2q+1)^2}.
\]

Apply Markov's inequality again to \(G_A^{'}(\beta)\), which is a polynomial of degree at most \(2q\), we get that
\[ \norm{G_A^{''}(\beta)}_\infty \le 2(2q)^2 \norm{G_A^{'}(\beta)}_\infty \le 4(2q+1)^4 \, .\]
Then we can see that 
\[ G_A(\beta) \ge \rho_F(A) \cdot \beta  - \frac{1}{2} \norm{G_A^{''}(\beta)}_\infty \cdot \beta^2  \ge \rho_F(A) \cdot \beta  - 2(2q+1)^4\cdot \beta^2  \, .\]
Therefore, at \( \beta = \frac{\rho_F(A)}{4(2q+1)^4}\), 
\[ G_A(\beta) \ge \frac{\rho^2_F(A)}{8(2q+1)^4} \, .\]
Note that the chosen \(\beta\) lies in \([0,1]\) because \(0 \le \rho_F(A) \le 2(2q+1)^2 \). As a result, whenever \(\rho_F(A)\) is noticeable, \(A\) gives rise to a valid \(\BU\) adversary with noticeable winning probability. 
\end{proof}

%% file: dbu2po.tex
We show that singleton-punctured density unforgeability ($\dBU$) implies $\PO$ in this section, which in turn will establish our main theorem that $\BU$ implies $\PO$. Given a $\PO$ adversary $A$, we will show a reduction that converts $A$ to an attacker against $\dBU$. 

\subsection{PO Adversary Smoothing}
\label{sec:smoothpo}
\input{smoothpo}

\subsection{Singleton Density Extraction}
\label{sec:singleton-ext}
\input{singleton-ext}

\subsection{Finite-Precision Efficient Implementation}
\label{sec:implementation}
\input{implement}

\subsection{Putting it Together}
\label{sec:spdu2po-proof}
\input{spdu2po}

%% file: smoothpo.tex
First, choose a value $t$ uniformly from $[0,1]$ and select values of $z_x$ for every $x\in \X$ such that $\E_z [z_x^j] = t^{|j|}$ for all integers $j$. This gives us a set of randomized phases which on expectation balance out around our selected $t$. We will show that such a set of phases is constructible in~\cref{sec:implementation}. 

\paragraph{Smoothing adversary $A_{t,z}$.} Each time our adversary $A$ would make a query $|x\rangle$, we modify it by applying the appropriate phases, giving us $z_x^*|x\rangle$ instead. As this is based on the query instead of the response, it does not require any oracle input or any external knowledge to calculate. Then, when $A$ outputs its final value, we modify it with the phase as follows:

\begin{equation*}
    |(x_1,y_1), \ldots, (x_{q+1}, y_{q+1}), \ldots \rangle \to \prod_{i=1}^{q+1} z_{x_i} |(x_1,y_1), \ldots, (x_{q+1}, y_{q+1}), 
    \ldots \rangle \, .
\end{equation*}

The result of this modification is that every message which appears in the queries has a nonpositive phase equal to $\lambda[h,X](x)$.  We call this modified version of the adversary ${A}_{t,z}$, and display a circuit implementation of it in Figure \ref{fig:poisson-smoothing}. 

\input{smooth-ckt}

\begin{theorem}
    \label{thm:singleton-puncture-smoothing} 
    Let $A$ be a $q$-query $\PO$ adversary with success rate $\delta$.  Then for $t$ uniform on $[0,1]$ and $z = (z_x)_{x\in \X}$ such that $\E_z[z_x^j] = t^{|j|} \,\,\,\forall j\in\mathbb{Z}$, 
    \begin{equation*}
        \E_{t,z} \Xi( A_{t,z}) \ge \frac{\delta}{4(q+1)}
    \end{equation*}
\end{theorem}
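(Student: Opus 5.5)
The plan is to fix the key $k$, expand the final state of $A_{t,z}$ in the coherent history decomposition of \cref{sec:quantum}, average the random phases history by history, and then use the Cauchy--Schwarz and energy argument sketched in the introduction. I will work with squared norms throughout, which is the form in which the bound $\delta/(4(q+1))$ arises. Concretely, I bound $\E_{t,z}\E_k\sum_a\norm{\Pi_{k,a}E_{a,0}A^{F_k}_{t,z}\ket0}^2$.

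\textbf{Step 1: phase bookkeeping.} Fix $k$ and a candidate output set $X$ with $|X|=q+1$, and let $\Pi_{k,X}$ project onto outputs whose input list is exactly $X$ and whose pairs are all valid under $k$. The projectors $\Pi_{k,X}$ for different $X$ are mutually orthogonal, and $\Pi_{k,a}=\sum_{X\ni a}\Pi_{k,X}$. Since each $z_x$ lies on the unit circle, the query phase $z_x^*$ equals $z_x^{-1}$. Hence the history $h$, restricted to the block $\Pi_{k,X}$, acquires exactly the scalar $\prod_x z_x^{\mathbf 1_X(x)-\nu_h(x)}=\prod_x z_x^{\lambda_{h,X}(x)}$. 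The $z_x$ are independent with $\E[z_x^j]=t^{|j|}$, so averaging over $z$ multiplies this history by $t^{\|\lambda_{h,X}\|_1}=t^{2|M_X(h)|-1}$, using \cref{lem:deficit-identities}.

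Now group the histories by their omitted set $M=M_X(h)$. Let $v^{k,X}_M$ be the $\Pi_{k,X}$-projected contribution of all histories of the original $A$ with $M_X(h)=M$. The filter $E_{a,0}$ keeps exactly the histories with $a\in M$. Therefore
\[
\E_z\,\Pi_{k,X}E_{a,0}A^{F_k}_{t,z}\ket0=\xi^{k,X}_a(t):=\sum_{M\ni a}t^{2|M|-1}v^{k,X}_M .
\]
By convexity of $\norm{\cdot}^2$ (\cref{lemma:jensen's}), $\E_z\norm{\cdot}^2\ge\norm{\E_z\cdot}^2$. By orthogonality of the $X$-blocks,
\[
\E_z\sum_a\norm{\Pi_{k,a}E_{a,0}A_{t,z}\ket0}^2\ \ge\ \sum_X\sum_{a\in X}\norm{\xi^{k,X}_a(t)}^2 .
\]

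\textbf{Step 2: the interpolation.} For each block define $H^{k,X}(t)=\sum_{\emptyset\neq M\subseteq X}t^{2|M|}v^{k,X}_M$. Since $\sum_{a\in X}\sum_{M\ni a}=\sum_M|M|$, we get $\sum_{a\in X}\xi^{k,X}_a(t)=\tfrac12 (H^{k,X})'(t)$. Discrete Cauchy--Schwarz over the $q+1$ points of $X$ (\cref{lem:cauch-schwarz-discrete}) gives
\[
\sum_{a\in X}\norm{\xi_a^{k,X}(t)}^2\ \ge\ \frac{1}{4(q+1)}\norm{(H^{k,X})'(t)}^2 .
\]
Integrating over $t\in[0,1]$ and applying \cref{lem:cauch-schwarz-continuous} with $g\equiv1$ yields
\[
\int_0^1\norm{H'}^2\,dt\ \ge\ \norm{H(1)-H(0)}^2=\norm{v^{k,X}}^2,
\]
because $H(0)=0$ (every $M$ is nonempty) and $H(1)=v^{k,X}=\sum_M v^{k,X}_M=\Pi_{k,X}A^{F_k}\ket0$. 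Summing over $X$ and averaging over $k$ gives $\E_k\sum_X\norm{\Pi_{k,X}A^{F_k}\ket0}^2=\delta$, the $\PO$ success probability. This yields the claimed bound $\delta/(4(q+1))$.

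\textbf{Main obstacle.} I expect the delicate step to be Step 1: checking that the circuit of \cref{fig:poisson-smoothing} really multiplies each history by exactly $z^{\lambda_{h,X}}$. This requires that the output phase is attached to the output register after the final unitary, and that query phases commute appropriately with the projections $P_{x_i}$. It also requires that the Jensen step be applied before $E_{a,0}$ and the projections mix anything. Both are justified because $E_{a,0}$, $\Pi_{k,X}$ and the phase operators are all diagonal in the history/output labels.

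A secondary point is the mismatch between the squared norms used here and the unsquared norm appearing in the definition of $\Xi$. Since every norm involved is at most $1$, $\norm{\cdot}\ge\norm{\cdot}^2$, so the bound transfers to $\Xi$ as written.
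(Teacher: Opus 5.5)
Your proposal is correct and follows essentially the same route as the paper's proof. You average the phases history by history to get $\E_z u_a(z)=U_a(t)$, apply Jensen, use $\sum_{a\in X}U_a(t)=\tfrac12 H'(t)$ with discrete Cauchy--Schwarz, apply the energy bound $\int_0^1\norm{H'(t)}^2\,dt\ge\norm{v}^2$, and finally sum over the orthogonal $X$-blocks and average over $k$, exactly as the paper does. Two of your details are cleaner than the paper's loose treatment and are sound: the projector-based definition of $v^{k,X}_M$, and your remark that the squared-norm bound transfers to the unsquared norm in the definition of $\Xi$, which holds since each filtered vector is produced by contractions ($E_{a,0}$ inserts $I-P_a$ before each query) and so has norm at most $1$.
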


Let us first develop several useful claims towards proving the theorem.  We define $z^{\lambda_{h,X}} := \prod_xz_x^{\lambda_{h,X}(x)}$. If we take the expectation over all possible choices of $z$, we get \begin{equation*}  \E_z\!\left[z^{\lambda_{h,X}}\right]
    =
    t^{\|\lambda_{h,X}\|_1}
    =
    t^{2|M_X(h)|-1}.
\end{equation*}

For a fixed key $k$ and output set $X$, if we choose an omitted set $M \subseteq X$, the superposition of all histories that are correct under $k$ whose outputs are equal to $X$ such that $M(h,X)=M$ is called $v_{M}$. 

 \begin{equation*}
     v_{M} := \sum_{h:M_X(h)=M} \alpha_h\ket{h}.
 \end{equation*}

We will consider this superposition over all such exclusion sets, which we do by introducing a particular formula that allows us to achieve multiple goals at once---the other purpose of this will be shown later in Lemma \ref{lemma:quarters}.

\begin{align*} 
    H(t) :=& \sum_{M\ne\emptyset} t^{2|M|} v_{M}.\\
    H(0) = 0,& \mspace{20 mu} H(1) = v.
\end{align*}

This new $v$ is the superposition over all histories $h$ that contribute to an output $X$ which is correct for key $k$. However, we can also add together only a specific set of these $v_M$ together in order to get a value which is useful for analyzing the single-touch case. 

\begin{equation*}
    U_a(t) := \sum_{M\owns a}  t^{2|M|-1} v_M
\end{equation*}

This value is the $t$-adjusted superposition over all successful histories for whom the message $a$ is output without having been queried. In other words, it's every possible way that $A$ could have theoretically forged $a$ as the plus-one message, with priority given to those histories where the number of forgeries is low. After all, since $t \in [0,1]$, larger sets $M$ are made more unlikely by this rescaling.   

As $t$ is an abstract value that doesn't actually touch the construction of ${A}_{t,z}$, we also create a second value which is a close approximation of the first, but which clings much closer to our actual definition for the modified adversary.

\begin{equation*}
    u_a (z) := \sum_{h:\nu_h(a) = 0} \prod_{x\in\X} z_x^{\lambda_{h,X}(x)} \alpha_h\ket h. 
\end{equation*}

Our choice of $z$ makes $U_a (t) = \E_z u_a (z)$, and therefore Lemma \ref{lemma:jensen's}  gives that

\[ \norm{U_a (t)}^2  = \norm{\E_z u_a (z)}^2 \le \E_z \norm{u_a (z)}^2 \, . \]

With these two sums that focus on a specific $a$, we'd like to see how they compare to the unfocused sum. This is the point where our previous definition of $H$ becomes useful, as it lets us draw a direct link between the derivative $H'$ and the total over all $a$ of the $U_a$ values. 

\begin{lemma}\label{lemma:quarters}
    For every fixed $k, X$, 
\begin{equation*}
        \sum_{a\in X} ||U_a (t)||^2 \ge \frac{1}{4(q+1)} \left\lVert H' (t) \right\rVert^2 \, . 
    \end{equation*}
\end{lemma}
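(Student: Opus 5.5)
We prove the inequality by computing $H'(t)$ termwise, rewriting it as a sum of the $U_a(t)$, and then applying the discrete Cauchy--Schwarz inequality (\cref{lem:cauch-schwarz-discrete}) over the $q+1$ points of $X$.

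First, differentiate the finite polynomial sum defining $H$:
\[
    H'(t)=\sum_{\emptyset\neq M\subseteq X} 2|M|\,t^{2|M|-1}v_M .
\]
Next, write $|M|=\sum_{a\in X}\mathbf 1[a\in M]$ and exchange the two finite sums:
\[
    H'(t)=2\sum_{a\in X}\sum_{M\ni a} t^{2|M|-1}v_M = 2\sum_{a\in X}U_a(t).
\]
This identity is exact. It is the reason for choosing the exponent $2|M|-1$ in $U_a$, matching the $\ell_1$-norm $2|M_X(h)|-1$ from \cref{lem:deficit-identities}.

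Then take norms and apply the triangle inequality followed by Cauchy--Schwarz on the vector $(\|U_a(t)\|)_{a\in X}\in\mathbb R^{q+1}$ against the all-ones vector:
\[
    \|H'(t)\|^2 = 4\Bigl\|\sum_{a\in X}U_a(t)\Bigr\|^2 \le 4\Bigl(\sum_{a\in X}\|U_a(t)\|\Bigr)^2 \le 4(q+1)\sum_{a\in X}\|U_a(t)\|^2 .
\]
Here we use $|X|=q+1$. Rearranging gives the claim.

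No step is a serious obstacle. The only point requiring care is bookkeeping: the vectors $v_M$ must be defined consistently for the fixed $k$ and $X$, as sums over successful histories with output set $X$ and $M_X(h)=M$, so that the regrouping $\sum_M |M|\,v_M=\sum_a\sum_{M\ni a}v_M$ is a legitimate exchange of finite sums of vectors in the same Hilbert space.
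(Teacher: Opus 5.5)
Your proof is correct and follows the paper's argument: you obtain the exact identity $\sum_{a\in X}U_a(t)=\tfrac12 H'(t)$ by regrouping $|M|=\sum_{a\in X}\mathbf 1[a\in M]$, and then apply Cauchy--Schwarz over the $q+1$ points of $X$. Your explicit triangle-inequality step simply spells out the bound $\|\sum_{a}U_a(t)\|^2\le (q+1)\sum_{a}\|U_a(t)\|^2$, which the paper invokes directly.
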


\begin{proof} First we unpack the definition of $U_a$ and obtain that
 \begin{align*}
        \sum_{a\in X} U_a (t) &= \sum_{a\in X} \sum_{M\owns a} t^{2|M|-1} v_{k, \vec x, M} \\ 
        &= \sum_{M\neq \emptyset} |M| t^{2|M|-1 }v_M \\
        &= \frac{1}{2} H'(t) \, . 
    \end{align*}
    
   Then Cauchy-Schwarz (~\cref{lem:cauch-schwarz-discrete}) over the $q+1$ messages in $X$ gives that, 

    \begin{equation*}
        \sum_{a\in X} \lVert U_a (t)\rVert^2 \ge \frac{1}{q+1} \left\lVert \sum_{a\in X} U_a(t) \right\rVert^2 \ge \frac{1}{4(q+1)} \left\lVert H' (t) \right\rVert^2\, .  
    \end{equation*}
\end{proof}

We are now ready to prove~\cref{thm:singleton-puncture-smoothing}. Here we include a summary of symbols that have been defined for the proof for easy lookup. 

\begin{table}[ht]
\centering
\caption{Notation used in the Poisson-smoothing argument.}
\label{tab:smoothing-notation}
\begin{tabular}{c|p{9cm}}
    \hline
    Symbol & Definition \\
    \hline
    $t\in [0,1)$
    & A uniformly random scalar controlling the smoothing process. \\
    \hline
    $z=\{z_x\}_{x\in\X}$
    & A collection of random phases for smoothing. \\
    \hline
    ${A}_{t,z}$
    & The smoothed adversary   from $A$ using parameters $(t,z)$. \\
    \hline
    $v_M$
    & The state contribution of histories that omit exactly the set $M$. \\
    \hline
    $v$
    & The total successful state contribution of the original PO adversary $A$. \\
        \hline
    $\Xi(A)$
    & The overall zero-touch success mass of an adversary $A$. \\
    \hline
    $H(t)$
    & An interpolation satisfying $H(0)=0$ and $H(1)=v$. \\
    \hline
    $U_a(t)$
    & The smoothed zero-touch contribution associated with point $a$ at parameter $t$. \\
    \hline
    $u_a(z)$
    & The corresponding zero-touch contribution expressed using the phase collection $z$. \\
    \hline
\end{tabular}
\end{table}

\begin{proof}[Proof of~\cref{thm:singleton-puncture-smoothing}]
   Fix $k, X$ and observe that \begin{align*}
        v &= H(1) - H(0) = \int_0^1 H' (t) dt. 
    \end{align*}

    Taking the squared norm on both sides, and then use Lemma \ref{lem:cauch-schwarz-continuous} to move the integral out of the squared norm.  
    \begin{equation*}
        \int_0^1 \lVert H' (t) \rVert^2 dt \ge \left\lVert \int_0^1 H'(t) dt \right \rVert^2 = \lVert v\rVert^2. 
    \end{equation*}
   We can then go through the following chain of logic: 
    \begin{align*}
        \int_0^1  \sum_{a\in X}\E_z\lVert u_a (z) \rVert^2 dt &\ge \int_{0}^1 \sum_{a\in X} \left\lVert U_a(t) \right\rVert^2 dt\\
        &\ge \int_0^1 \frac{1}{4(q+1)} \lVert H' (t) \rVert^2 dt \\
        &= \frac{1}{4(q+1)} \int_0^1\lVert H'(t) \rVert ^2 dt \\
        &\ge \frac{1}{4(q+1)} \lVert v\rVert^2. 
    \end{align*}
    Recall that we originally fixed a certain $\vec x$, which must be disjoint with all other possible $\vec x$. At this point, we sum over all $\vec x$, which is possible due to the disjoint, and then we take the expected value for all $k$ to get a measurement of the success of $A_z$. We do this on both sides, and then show the conversion into our claim. 
    First, let's show that the left hand side of the inequality matches our goal. 
    \begin{align*}
       & \E_k \sum_{\vec x} \int_0^1  \sum_{a\in \vec x}\E_z\lVert u_a (z) \rVert^2 dt \\
        =&   \int_0^1\E_z\E_k \sum_{\vec x} \sum_{a\in \vec x}\left\lVert u_a (z) \right\rVert^2 dt\\
        =& \E_{t,z} \E_k \sum_{\vec x} \sum_{a\in \vec x}\left\lVert u_a (z) \right\rVert^2.
        \end{align*}

        In the first step, we rearrange the terms, passing the expectation over $z$ outside of the sums, and pulling the integral with respect to $t$ all the way out. In the next step, we take advantage of the fact that $t$ is being chosen uniformly over $[0,1]$. As each value of $t$ is equally likely to be selected, the integral from $0$ to $1$ over $t$ is exactly the same as the expectation over $t$, which lets us convert it into $\E_{t,z}$. Next, we swap the two sums, exchanging from each $X$ and each $a \in X$ to starting from each $a$ and then selecting over each $X$ which contains $a$. This is allowed because the two sum orders lead to the same total so long as it's the case that the total number of possible $X$ is finite, which it is since the length is fixed and the size of $\X$ is finite. Continuing from above, we have that
        
        \begin{align*}
        & \E_{t,z} \E_k \sum_{a\in \X} \sum_{X \owns a}\left\lVert u_a (z) \right\rVert^2\\
        =& \E_{t,z} \E_k \sum_{a \in \X} \norm{ \sum_{\vec x \owns a} u_a(z)}^2\\
        =&  \E_{t,z} \E_k \sum_{a \in \X} \norm{ \sum_{X \owns a} \sum_{h:\nu_h(a) = 0} \prod_{x\in\X} z_x^{\lambda_{h,X}(x)} \alpha_h\ket h}^2.
        \end{align*} 

        After swapping the sums, we note that since every single $X$ is orthogonal to each other one, the sum over the squared norm of the terms is the same as the squared norm of the sum, allowing us to slide the sum into the squared norm. Then we unpack the definition of $u_a$, unfolding the equation to give a complete description of what we're doing.

        Finally, we observe that the inside of our norm is an exact match for the terms we used to define $\Xi(A)$, in that on the inside is the sum of phases which correspond to the adversary, modified using our phase-smoothing trick. Then, there's something that focuses only on the histories where $a$ is never queried. Then, there's a focus that causes only the histories that output $a$ to be kept. As it turns out, the two equations are one-to-one identical---all nonzero terms are shared across the two sums, without any repeats. 
        
        \begin{align*}
        =& \E_{t,z} \E_k \sum_{a\in \X} \lVert \Pi_{k, a} E_{a,0} A_{t,z}^{F_k} \ket{0} \rVert^2\\
        =& \E_{t,z} \Xi(\tilde A_{t,z}). 
    \end{align*}

    Since we have now shown the left hand side of the inequality corresponds with our goal, we turn to the right hand side to see if things keep working properly.  

    \begin{align*}
        \E_k \sum_{X}\frac{1}{4(q+1)} \lVert v\rVert^2  &= \frac{1}{4(q+1)}\E_k \sum_{X} \lVert v\rVert^2\\ 
        &= \frac{\delta}{4(q+1)}.
    \end{align*}

    \begin{remark}
        We are able to replace the second part of the term with $\delta$ because the expectation over all keys of the sum of the probability for all successful outputs is equal to the chance of success of $A$.
    \end{remark}
    
    Since both the left hand side and right hand side reduce to our desired output, we have proven our claim.  
\end{proof}

%% file: smooth-ckt.tex
\begin{figure}[ht]
\centering
\begin{tikzpicture}[
  x=0.82cm,y=0.72cm,
  wire/.style={line width=0.45pt},
  gate/.style={
      draw=gateborder,
      fill=white,
      rounded corners=2pt,
      minimum width=0.72cm,
      minimum height=0.62cm,
      inner sep=1.5pt,
      font=\small
  },
  phase/.style={
      gate,
      minimum width=0.62cm
  },
  oracle/.style={
      gate,
      minimum width=0.82cm,
      minimum height=1.18cm
  },
  biggate/.style={
      gate,
      minimum width=0.82cm,
      minimum height=2.35cm
  },
  finalgate/.style={
      gate,
      minimum width=0.76cm
  },
  ctrl/.style={
      circle,
      fill=gateborder,
      inner sep=0pt,
      minimum size=4.2pt
  },
  lab/.style={
      font=\small,
      anchor=east
  }
]

\def\yth{0}
\def\yi{-1.0}
\def\yy{-1.8}
\def\yw{-2.6}
\def\yo{-3.5}

\node[
    draw=figborder,
    fill=figbg,
    rounded corners=7pt,
    fit={(-0.75,0.55) (13.15,-4.55)},
    inner sep=0pt
] (panel) {};

\node[
    draw=figborder,
    fill=figborder,
    rounded corners=2pt,
    anchor=west,
    font=\bfseries\footnotesize
] at (-0,0.48) {Figure~\ref{fig:poisson-smoothing}: Poisson smoothing};

\node[lab] at (0,\yth) {$\Theta$};
\node[lab] at (0,\yi) {$\mathcal X$};
\node[lab] at (0,\yy) {$\mathcal Y$};
\node[lab] at (0,\yw) {$\mathcal W$};
\node[lab] at (0,\yo) {$\mathcal O$};

\foreach \y in {\yth,\yi,\yy,\yw,\yo}
  \draw[wire] (0.25,\y) -- (12.65,\y);

\node[biggate] (A0)
    at (1.35,{(\yi+\yo)/2})
    {$A_0$};

\node[phase] (R0)
    at (2.65,\yi)
    {$R_{-}$};

\node[ctrl] (c0)
    at (2.65,\yth) {};

\draw[wire] (c0) -- (R0);

\node[oracle] (O0)
    at (3.75,{(\yi+\yy)/2})
    {$O_{F_k}$};

\node[biggate] (A1)
    at (4.95,{(\yi+\yo)/2})
    {$A_1$};

\node[phase] (R1)
    at (6.25,\yi)
    {$R_{-}$};

\node[ctrl] (c1)
    at (6.25,\yth) {};

\draw[wire] (c1) -- (R1);

\node[oracle] (O1)
    at (7.35,{(\yi+\yy)/2})
    {$O_{F_k}$};

\node[font=\large]
    at (8.45,{(\yi+\yo)/2})
    {$\cdots$};

\node[biggate] (Aq)
    at (9.65,{(\yi+\yo)/2})
    {$A_q$};

\node[finalgate] (C)
    at (11.05,\yo)
    {$C$};

\node[ctrl] (cf)
    at (11.05,\yth) {};

\draw[wire] (cf) -- (C);

\draw[-{Stealth[length=2mm]},wire]
    (12.15,\yo) -- (12.65,\yo);

\node[font=\scriptsize,anchor=south]
    at (12.38,\yo) {out};

\node[font=\scriptsize,anchor=north]
    at (2.65,-3.88)
    {$z_x^*$};

\node[font=\scriptsize,anchor=north]
    at (6.25,-3.88)
    {$z_x^*$};

\node[font=\scriptsize,anchor=north]
    at (11.05,-3.88)
    {$\Pi_{x\in X }z_x$};

\end{tikzpicture}

\caption{
Circuit implementation of Poisson phase smoothing.
The random phase data $\vec{z}=(z_x)$ is stored in the register $\Theta$.
Before each oracle call, $\Theta$ controls the address phase
\(R_{-}\lvert x\rangle = z_x^*\lvert x\rangle.\)
After $A_q$ has coherently produced the output address set $X$,
the final controlled compensation contributes
$\Pi_{x\in X} z_x$.
Thus a history with query-count vector $\nu$ acquires net phase
\(\langle \vec{z},\mathbf 1_X-\nu\rangle\).
}
\label{fig:poisson-smoothing}
\end{figure}

%% file: singleton-ext.tex
We want to use the theorem we just proved to convert $\tilde{A}_{t,z}$, a plus-one adversary smoothed using our technique, into some sort of singleton-puncture adversary. We do this using the following claim.

\begin{lemma}
    Let $B$ be a $q$-query adversary against the $\PO$ game, and let $F_k^{\setminus a}$ be an oracle with an unknown singleton puncture $a$. There exists an algorithm $C$ that uses at most $3q+1$ queries such that 
    \begin{equation}
       \rho_F(C) = \E_k \sum_{a\in \X} \Pr [C^{F_k^{\backslash a}}\, outputs\, (a, F_k(a))] \ge \Xi(B) . 
    \end{equation}
    \label{lemma:singlton-ext}
\end{lemma}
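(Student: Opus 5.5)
The plan is to build $C$ from $B$ in three stages: run $B$ against the punctured oracle $F_k^{\setminus a}$, dephase the histories of $B$ by how often they touch the puncture, and finally locate the puncture among $B$'s reported inputs. Throughout I read $\Xi$ with squared norms, as it is used in the proof of \cref{thm:singleton-puncture-smoothing}. The target is a per-$a$ inequality $p_{k,a}(C)\ge\norm{\Pi_{k,a}E_{a,0}B^{F_k}\ket0}^2$, which gives the lemma after taking $\E_k\sum_a$.

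\emph{Step 1 (touch-count twirl).} $C$ samples $j$ uniformly from $\mathbb Z_{q+1}$ and sets $\omega=e^{2\pi i j/(q+1)}$. It runs $B$, simulating each of $B$'s queries with two calls to $F_k^{\setminus a}$:
\begin{itemize}
\item query $\ket{x}$ into a fresh ancilla $Z$;
\item apply the phase $\omega$ controlled on $Z$ holding $\bot$;
\item XOR the contents of $Z$ into $B$'s answer register, which reproduces $B$'s own punctured query;
\item query again to uncompute $Z$.
\end{itemize}
The simulation never needs to know $a$, since $\bot$ appears exactly at $x=a$. In the history decomposition of \cref{sec:quantum}, a history $h$ with $\nu_h(a)=c$ picks up the phase $\omega^{c}$. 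So the final state is $\sum_{c=0}^{q}\omega^{c}\,w_{a,c}$, where $w_{a,c}:=E_{a,c}B^{F_k^{\setminus a}}\ket0$ and $E_{a,c}$ filters histories with exactly $c$ touches of $a$. This step uses $2q$ queries.

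\emph{Step 2 (dephasing and the zero-touch sector).} Fix any projector $\Pi$ on the output. Averaging over $j$ gives
\[
\E_j\Bigl\|\Pi\sum_c\omega^c w_{a,c}\Bigr\|^2=\sum_{c,c'}\E_j\bigl[\omega^{c-c'}\bigr]\langle \Pi w_{a,c'},\Pi w_{a,c}\rangle=\sum_{c=0}^{q}\norm{\Pi w_{a,c}}^2 .
\]
The cross terms vanish because $|c-c'|\le q<q+1$, and this is exactly why the root of unity has order $q+1$. I then drop every $c\ge1$ term. A history that never queries $a$ never sees the puncture, so $w_{a,0}=E_{a,0}B^{F_k}\ket0$. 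Taking $\Pi=\Pi_{k,a}$ therefore shows that, with probability at least $\norm{\Pi_{k,a}E_{a,0}B^{F_k}\ket0}^2$, $B$ outputs $q+1$ distinct pairs that are all valid under $k$ and include $a$.

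\emph{Step 3 (locating the puncture).} $C$ measures $B$'s output list and queries $F_k^{\setminus a}$ classically on each of the $q+1$ reported inputs. On the good event there is exactly one $\bot$ answer, and it occurs at $a$. $C$ outputs that input together with the value $B$ reported for it, which equals $F_k(a)$ on this event. This adds $q+1$ queries, for a total of $3q+1$. Because the good event is contained in the event that $C$ outputs $(a,F_k(a))$, we get $p_{k,a}(C)\ge\norm{\Pi_{k,a}E_{a,0}B^{F_k}\ket0}^2$, and summing over $a$ and averaging over $k$ yields $\rho_F(C)\ge\Xi(B)$.

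The main obstacle is Step 2. Positive-touch histories are genuinely altered by the puncture and could cancel the zero-touch amplitude, so the argument relies on the twirl making all touch-count sectors mutually incoherent. One point needs care here: the phase and uncomputation gadget must not disturb $B$'s view of the oracle. $Z$ must be returned to a clean state so that it does not entangle with, and thereby partially measure, the history.
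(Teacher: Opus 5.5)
Your proposal is correct and follows the paper's proof essentially step for step. It uses the same order-$(q+1)$ root-of-unity touch-count twirl built from two punctured-oracle calls per query, then queries the $q+1$ reported inputs to locate the unique $\bot$, for $3q+1$ queries in total. Your explicit computation $\E_j\|\Pi\sum_c\omega^c w_{a,c}\|^2=\sum_c\|\Pi w_{a,c}\|^2\ge\|\Pi w_{a,0}\|^2$ makes rigorous what the paper only asserts informally as ``filtering'' the zero-touch histories. Reading $\Xi$ with squared norms is also right: that is how the smoothing theorem uses it, and with unsquared norms the inequality would not hold in general.
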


\begin{proof}
    Let $\omega = e^{2\pi i/(q+1)}$ be the $q+1$-th root of unity. $C$ selects a random value $r$ uniformly from $[0, q]$. Then, $C$ simulates the oracle for $B$ by performing each query that $B$ makes twice. One of the two responses is passed into $B$, while the other is stored inside of $C$. If the response turns out to be the punctured value, then multiply by the phase $\omega^r$. As a result, a history that queries the punctured point $s$ times is given phase $\omega^{rs}$. 

    \input{extraction-ckt}

    We are now able to filter for just the histories which never query the punctured point, since the phase of those histories is equal to $1$ while all other histories have some non-$1$ phase as $s < q+1$. 

    This gives us a set of histories which behave the same whether or not the oracle is punctured, since they never query the oracle to begin with. Now, $B$ will return a set of $q+1$ message-forgery pairs, and $C$ queries each one of the messages against the oracle. If none of them were punctured, then $C$ rejects due to having failed. However, if one of the messages returns $\bot$, then $C$ will return that message and its associated forgery. 

    This leads to a total of $3q+1$ queries made by $C$, since it first runs all $q$ queries from $B$ twice giving $2q$, then it tests all $q+1$ outputs to see if any of them were the punctured value. 

    As shown in Figure \ref{fig:singleton-extraction}, it is possible to implement this entire protocol exactly as described on a quantum circuit, showing that $C_B$ exists. 
    Furthermore, since every single time that $B$ succeeds, $C$ also succeeds, we have proof that the success chance is at least as large.

\end{proof}

%% file: extraction-ckt.tex
\begin{figure}[ht]
\centering
\begin{tikzpicture}[
  x=0.80cm,y=0.72cm,
  wire/.style={line width=0.45pt},
  gate/.style={
      draw=gateborder,
      fill=white,
      rounded corners=2pt,
      minimum width=0.82cm,
      minimum height=0.62cm,
      inner sep=1.5pt,
      font=\small
  },
  twirl/.style={
      gate,
      minimum width=1.02cm,
      minimum height=1.12cm
  },
  biggate/.style={
      gate,
      minimum width=0.92cm,
      minimum height=2.35cm
  },
  checkgate/.style={
      gate,
      minimum width=1.38cm,
      minimum height=0.70cm
  },
  oracle/.style={
      gate,
      minimum width=0.82cm,
      minimum height=1.05cm
  },
  phase/.style={
      gate,
      minimum width=0.72cm
  },
  ctrl/.style={
      circle,
      fill=gateborder,
      inner sep=0pt,
      minimum size=4.2pt
  },
  target/.style={
      circle,
      draw=gateborder,
      line width=0.5pt,
      inner sep=0pt,
      minimum size=9pt
  },
  lab/.style={
      font=\small,
      anchor=east
  }
]


\def\yr{0}
\def\yi{-0.9}
\def\yy{-1.6}
\def\yw{-2.3}
\def\yo{-3.1}

\node[
    draw=figborder,
    fill=figbg,
    rounded corners=7pt,
    fit={(-1.05,0.58) (12.90,-7.35)},
    inner sep=0pt
] (panel) {};

\node[
    draw=figborder,
    fill=figborder,
    rounded corners=3pt,
    anchor=west,
    font=\bfseries\footnotesize
] at (-0,0.50)
{Fig.~\ref{fig:singleton-extraction}: Singleton extraction};

\node[lab] at (0.28,\yr) {$\mathcal R:\lvert r\rangle$};
\node[lab] at (0,\yi) {$\mathcal X$};
\node[lab] at (0,\yy) {$\mathcal Y$};
\node[lab] at (0,\yw) {$\mathcal W$};
\node[lab] at (0,\yo) {$\mathcal O$};

\foreach \y in {\yr,\yi,\yy,\yw,\yo}
  \draw[wire] (0.25,\y) -- (12.35,\y);

\node[biggate]
    at (1.20,{(\yi+\yo)/2})
    {$B_0$};

\node[twirl] (T1)
    at (2.65,{(\yi+\yy)/2})
    {$\hat O_r$};

\node[ctrl] (c1)
    at (2.65,\yr) {};

\draw[wire] (c1) -- (T1.north);

\node[biggate]
    at (4.05,{(\yi+\yo)/2})
    {$B_1$};

\node[twirl] (T2)
    at (5.45,{(\yi+\yy)/2})
    {$\hat O_r$};

\node[ctrl] (c2)
    at (5.45,\yr) {};

\draw[wire] (c2) -- (T2.north);

\node[font=\large]
    at (6.65,{(\yi+\yo)/2})
    {$\cdots$};

\node[twirl] (Tq)
    at (7.85,{(\yi+\yy)/2})
    {$\hat O_r$};

\node[ctrl] (cq)
    at (7.85,\yr) {};

\draw[wire] (cq) -- (Tq.north);

\node[biggate]
    at (9.20,{(\yi+\yo)/2})
    {$B_q$};

\node[checkgate] (Chk)
    at (10.85,\yo)
    {$\mathrm{Check}_{q+1}$};

\draw[-{Stealth[length=2mm]},wire]
    (11.75,\yo) -- (12.45,\yo);

\node[font=\scriptsize,anchor=south]
    at (12.05,\yo) {out};


\draw[
    dashed,
    draw=figborder,
    line width=0.5pt
]
    (-0.55,-4.05) -- (12.35,-4.05);


\node[
    font=\bfseries\footnotesize,
    anchor=west
]
    at (-0.55,-4.40)
    {One twirled query $\hat O_r$};

\node[
    font=\scriptsize,
    anchor=east
]
    at (12.20,-4.50)
    {$O_\star:=O_{F_k^{\setminus\{x^\star\}}}$};

\def\ir{-4.95}
\def\ii{-5.65}
\def\is{-6.35}
\def\iy{-7.00}

\node[lab] at (0.45,\ir) {$\mathcal R$};
\node[lab] at (0.45,\ii) {$\mathcal I$};
\node[lab] at (0.45,\is) {$\mathcal S:\;\lvert 0\rangle$};
\node[lab] at (0.45,\iy) {$\mathcal Y$};

\foreach \y in {\ir,\ii,\is,\iy}
  \draw[wire] (0.70,\y) -- (12.05,\y);

\node[oracle] (Q1)
    at (2.55,{(\ii+\is)/2})
    {$O_\star$};

\node[phase] (Phi)
    at (4.45,\is)
    {$\Phi_{\bot}$};

\node[ctrl] (cr)
    at (4.45,\ir) {};

\draw[wire] (cr) -- (Phi.north);

\node[ctrl] (cs)
    at (6.15,\is) {};

\node[target] (ty)
    at (6.15,\iy) {};

\draw[wire] (cs) -- (ty);

\draw[gateborder,line width=0.5pt]
    ($(ty.center)+(-0.13,0)$)
    --
    ($(ty.center)+(0.13,0)$);

\draw[gateborder,line width=0.5pt]
    ($(ty.center)+(0,-0.13)$)
    --
    ($(ty.center)+(0,0.13)$);

\node[oracle] (Q2)
    at (7.85,{(\ii+\is)/2})
    {$O_\star$};

\node[
    font=\scriptsize,
    anchor=west
]
    at (12.00,\is)
    {$\lvert 0\rangle$};

\end{tikzpicture}

\caption{
Singleton zero-touch extraction by root-of-unity touch-count twirling.
The unknown puncture is $x^\star$.
Choose $r\in\mathbb Z_{q+1}$ uniformly, let
$\omega=e^{2\pi i/(q+1)}$, and reuse the same $r$ in all $q$
simulated queries of $B$.
Each $\hat O_r$ uses two calls to the singleton-punctured oracle
$O_\star=O_{F_k^{\setminus\{x^\star\}}}$.
The first call computes the response into the clean scratch register
$\mathcal S$; the controlled phase
\(\Phi_{\bot}\lvert r,z\rangle
    =
    \omega^{\,r\mathbf 1[z=\bot]}
    \lvert r,z\rangle \)
adds phase $\omega^r$ exactly when the query address is the puncture.
The response is then added into $B$'s answer register, and the second
oracle call uncomputes $\mathcal S$.  After $B$ outputs its $q+1$ candidate pairs,
$\mathrm{Check}_{q+1}$ queries all reported inputs,
accepts only when exactly one response is $\bot$ and all other claimed
pairs verify, and outputs the pair already claimed at the punctured
input.  The total number of punctured-oracle calls is $2q+(q+1)=3q+1$.
}
\label{fig:singleton-extraction}
\end{figure}

%% file: implement.tex
We now show that the continuous phases used in Theorem~\ref{thm:singleton-puncture-smoothing} can be implemented with negligible error using a polynomial sized circuit.  We first define the function $\zeta(t,u)$ and show that it chooses the exact phases required for Theorem~\ref{thm:singleton-puncture-smoothing}. For $0\leq t < 1$ and $|u| = 1$, $\zeta(t,u)$ is defined as
    \[
    \zeta(t,u) = \frac{t+u}{1+tu}.
    \]
Note that since $|t+u|=|1+tu|$, we have $|\zeta(t,u)| = 1$.

\begin{lemma} \label{lem:poisson-phase}
    If $u$ is uniformly chosen on the unit circle, then for every $j\in \mathbb{Z}$,
    \[ \E[\zeta(t,u)^j] = t^{|j|}\]
\end{lemma}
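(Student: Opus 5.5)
We prove \cref{lem:poisson-phase} by expanding $\zeta(t,u)^j$ as a power series in $u$ and reading off its constant term, since for uniform $u=e^{2\pi i\theta}$ we have $\E[u^m]=\mathbf 1[m=0]$ for every integer $m$. The case $j=0$ is trivial. For $j<0$, note that $|\zeta(t,u)|=1$, so $\zeta^{j}=\overline{\zeta^{|j|}}$. Hence $\E[\zeta^j]=\overline{\E[\zeta^{|j|}]}=t^{|j|}$, because the value will turn out to be real. So it suffices to treat $j\ge 1$, and also $t=0$ is immediate ($\zeta=u$, and $\E[u^j]=0=0^j$ for $j\ge1$).

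For $0<t<1$ and $j\ge1$, use the Möbius structure. Since $|tu|=t<1$, we have $\frac{1}{1+tu}=\sum_{n\ge0}(-t)^n u^n$, which converges absolutely and uniformly on the circle. Therefore $\zeta^j=(t+u)^j(1+tu)^{-j}$ is a power series $\sum_{m\ge0}c_m u^m$ with only nonnegative powers of $u$, converging uniformly on $|u|=1$. Interchanging expectation and sum is justified by uniform convergence, and it gives $\E[\zeta^j]=c_0=\zeta(t,0)^j=t^j$.

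An equivalent route, which I would present as a sanity check, is the mean value property. The function $w\mapsto\zeta(t,w)^j$ is holomorphic on a neighborhood of the closed unit disk, because its only pole is at $w=-1/t$, outside the disk. Its average over the unit circle therefore equals its value at $w=0$, namely $t^j$.

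The only delicate points are the analytic justifications: the pole lying outside the closed disk requires $t<1$, and the endpoint $t=1$ is excluded in the definition of $\zeta$ anyway; and the conjugation step for negative $j$ uses $|\zeta|=1$. Neither is a real obstacle. The main thing to state carefully is that the negative-power terms vanish, which is exactly the holomorphy of $\zeta$ in $u$ on the disk. This also explains the Poisson-kernel interpretation: the law of $\zeta$ has Fourier coefficients $t^{|j|}$.
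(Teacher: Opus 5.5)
Your argument is correct and is essentially the paper's. The paper applies the mean-value property to the holomorphic map $w\mapsto\zeta(t,w)^j$ to get $t^j$ for $j\ge 0$, and conjugates using $|\zeta(t,u)|=1$ for $j<0$; your power-series constant-term computation is the standard proof of that mean-value property, and your remark that the only pole, $w=-1/t$, lies outside the closed disk when $t<1$ supplies the holomorphy justification the paper states only loosely.
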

\begin{proof}
    For $j\geq 0$ and $0\leq t<1$, 
    \[f_{j,t}(u) = \zeta(t,u)^j\] 
    is differentiable and continuous on all $u$ within the complex unit circle. Let $u = e^{i\theta}$ where $\theta\in [0,2\pi]$ is drawn uniformly at random, then the boundary mean-value property gives
        \[
            \frac{1}{2\pi}\int^{2\pi}_0 f_{j,t}(e^{i\theta})^j d\theta = f_{j,t}(0) = t^j.
        \]
         The same holds true for $j< 0$, since $|\zeta(t,u)| = 1$, and that $\zeta(t,u)^{-j}$ is the complex conjugate of $\zeta(t,u)^{j}$.
\end{proof}

\paragraph{Sampling $u_x$ efficiently.} We want to generate phases $\zeta(t,u)$ for each message $x$ independently at random. Let $a_x$ be the distinct integer encoding for each message $x$ where $d$ is the number of such distinct messages. Now independently at random sample a vector of seeds, $\mathbf{s}$, such that 
\[
\mathbf{s} = (s_0,..,s_{d-1}) 
\in \mathbb{T}^{d}\,\, ;
\qquad 
\mathbb{T} = \mathbb{R}/\mathbb{Z}.
\]
Then for each message $x$ define,
\[
\theta_x = \sum_{j=0}^{d-1} a_x^j\cdot s_j \quad (\text{mod }1) \, .
\]
Then we set $u_x = e^{2\pi i \theta_x}$ and the phases $z_x:= \zeta(t,u_x)$ that are required for Theorem~\ref{thm:singleton-puncture-smoothing}. 

\begin{lemma}[$d$-wise independence] \label{lem:phase-independence}
    For any $r\leq d$, every collection of at most $r$ distinct variables $\theta_x$ is jointly independently random over $\mathbb{T}^r$.
\end{lemma}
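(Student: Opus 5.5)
The plan is to reduce joint independence of any $r\le d$ distinct $\theta_x$ to the invertibility of a Vandermonde system over the torus, and then to use the fact that a surjective group homomorphism pushes Haar measure forward to Haar measure. Fix distinct messages $x_1,\ldots,x_r$ with distinct integer encodings $a_1,\ldots,a_r$. Then $(\theta_{x_1},\ldots,\theta_{x_r})=V\mathbf s \pmod 1$, where $V$ is the $r\times d$ integer matrix $V_{ij}=a_i^{\,j}$. Since $\mathbf s$ is Haar-uniform on $\mathbb T^d$, it suffices to show that the induced homomorphism $\phi_V:\mathbb T^d\to\mathbb T^r$ pushes Haar measure forward to Haar measure on $\mathbb T^r$. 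That pushforward property is exactly joint uniformity, hence independence of the $r$ coordinates.

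To prove the pushforward claim, I will use Fourier characters rather than surjectivity arguments. The key step is the following.

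\begin{itemize}
\item For each nonzero $m\in\mathbb Z^r$, compute $\E[e^{2\pi i\langle m,V\mathbf s\rangle}]=\E[e^{2\pi i\langle V^{\top}m,\mathbf s\rangle}]$.
\item This expectation equals $1$ if $V^{\top}m=0$ and equals $0$ otherwise, because the coordinates of $\mathbf s$ are independent and uniform.
\item It therefore suffices to show $V^{\top}m\neq 0$ for every nonzero $m$, that is, that $V$ has full row rank over $\mathbb Q$.
\end{itemize}

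Once all nontrivial Fourier coefficients of the law of $(\theta_{x_i})_i$ vanish, that law equals the uniform Haar measure on $\mathbb T^r$ by uniqueness of Fourier coefficients of measures on the torus.

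Full row rank follows from the Vandermonde structure. The first $r$ columns of $V$ (exponents $0,\ldots,r-1$, available since $r\le d$) form the square Vandermonde matrix $(a_i^{\,j})_{i,j<r}$. Its determinant is $\prod_{i<i'}(a_{i'}-a_i)\neq0$ because the encodings are distinct. Hence $V^{\top}m=0$ forces $m=0$ over $\mathbb Q$, and so over $\mathbb Z$ as well.

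The main obstacle is making sure the argument really works over $\mathbb T=\mathbb R/\mathbb Z$ and not just over a field. An integer matrix with nonzero determinant need not be invertible over $\mathbb Z$: the map $\mathbf s\mapsto V'\mathbf s$ on $\mathbb T^r$ is then $|\det V'|$-to-one, not bijective. The character computation sidesteps this, since it only needs $\ker V^{\top}\cap\mathbb Z^r=\{0\}$, not unimodularity, so I will present the proof that way. A finer point is that the lemma says ``at most $r$'' variables with $r\le d$. Any subcollection of size $r'\le r$ is handled by the same argument with $r'$ in place of $r$, so no separate case is needed.
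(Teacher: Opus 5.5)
Your proof is correct. Its skeleton is the paper's: both write $(\theta_{x_1},\ldots,\theta_{x_r})=V\mathbf s \bmod 1$ with $V_{ij}=a_{x_i}^{\,j}$, and both derive uniformity from a linear-independence property forced by distinct encodings. The difference is in how the two load-bearing steps are justified.

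The paper only observes that each pairwise difference $\theta_{x_i}-\theta_{x_k}$ has a nonzero coefficient vector. It then asserts that linear independence lets uniformity pass through $V$. Two things are missing there:
\begin{itemize}
\item Pairwise distinctness of rows does not imply linear independence of $r$ rows. Tellingly, that argument never uses $r\le d$, yet the conclusion is false for $r>d$: then $V$ has a nonzero integer vector $m$ with $V^{\top}m=0$, which gives the deterministic relation $\sum_i m_i\theta_{x_i}=0$.
\item Over $\mathbb T$, a full-rank integer map is generally not a bijection, so the transfer of uniformity needs its own argument.
\end{itemize}

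You supply both. The Vandermonde minor on exponents $0,\ldots,r-1$ is exactly where $r\le d$ enters, and it gives full row rank over $\mathbb Q$. The character identity $\E[e^{2\pi i\langle m,V\mathbf s\rangle}]=\mathbf 1[V^{\top}m=0]$, together with uniqueness of Fourier coefficients, identifies the law as Haar measure on $\mathbb T^r$ without any unimodularity assumption.

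Your route costs a little Fourier analysis but gives a complete proof. The paper's version is a sketch whose stated reasoning would not suffice on its own.
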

\begin{proof}
    Let $x_1,..x_r$ be $r\leq d$ distinct messages. The map $V: \mathbb{T}^d \rightarrow \mathbb{T}^r$  is defined as 
    \[
    V_{ij} = a^j_{x,i} 
    \qquad
    i \in \{1,..r\} , 
    \quad
    j \in \{0,..d-1\}
    \]
    which maps the seed vector $(s_0,..s_{d-1}) \mapsto (\theta_{x,1},..\theta_{x,r})$. 
    For any $i,k\leq r$ and $i\neq k$, let $\theta_{x,i}$ and $\theta_{x,k}$ be two elements of the output vector $\theta_x$. We have
    \[
        \theta_{x,i} - \theta_{x,k} = \sum_{j=0}^{j=d-1}(a_{x,i}^j-a_{x,k}^j)s_j \quad (\text{mod } 1).
    \]
    By definition, $a_{x,i}\neq a_{x,k}$ as each message has a unique integer encoding.
    Hence, $\theta_{x,i} - \theta_{x,k}$ is always non-zero. Thus, $\theta_x$ is linearly independent over all $\theta_{x,i}$ for all $i\leq r$. Thus the map $V$ takes the seed vector to a linearly independent vector. Since the seed vector is independently uniform at random, the output vector $\theta_x$ from the map $V$, remains independently uniform at random. 
\end{proof}
With Lemma~\ref{lem:poisson-phase} and Lemma~\ref{lem:phase-independence}, for any $
d$ distinct messages $x$, the phase can be implemented as 
\[
    |x\rangle 
    \mapsto 
    \zeta(t,u_x) \
    |x\rangle,
    \quad
    u_x = e^{2\pi i \theta_x},
\] 
which is the exact phase required for Theorem~\ref{thm:singleton-puncture-smoothing}.

\input{approx-chain-ckt}

Now it remains to show that when implementing $\zeta(t,u_x)$ both $t$ and $u_x$ can be sampled independently at random for at most $2q+1$ messages efficiently with up to negligible error. 
\begin{theorem}[Efficient Implementation] \label{thm:efficient-implementation} Let $A$ be a $q$-query plus-one adversary with success rate $\delta$. Then for $\widetilde{z} = (\zeta(\widetilde{t},\widetilde{u_x}))_{x\in \X}$ where    $\widetilde{t}$ is discrete and uniform on $[0,1-\tau]$ and for $x\in\X$, $\widetilde{u_x}$ is discrete and uniform on the unit circle, we have 
    \[
    \bigl|
        \mathbb E_{t,z} \Xi(A_{t,z})
        -
        \mathbb E_{\widetilde{t}, \widetilde{z}} \Xi(\widetilde{A_{t,z}})
    \bigr|
    \le \negl(\tau).
    \]
\end{theorem}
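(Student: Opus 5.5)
We split the error into two independent discretization errors, one in $t$ and one in the phases $u_x$, and compare expected zero-touch masses term by term. Write the final pre-measurement state as $\Psi(t,\mathbf u):=A_{t,z}^{F_k}\ket 0$ with $z_x=\zeta(t,u_x)$. Then $\Xi(A_{t,z})=\E_k\sum_a\norm{\Pi_{k,a}E_{a,0}\Psi(t,\mathbf u)}^2$, which is the quantity actually produced in the proof of \cref{thm:singleton-puncture-smoothing}. The plan is to first show that $t\mapsto \E_{\mathbf u}\Xi(A_{t,z})$ is a polynomial in $t$ of bounded degree, and then show that replacing continuous uniform $u_x$ by uniform $N$-th roots of unity changes nothing up to a truncation error.

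\emph{Step 1 (phases).} Expanding the history decomposition, every amplitude of $\Psi$ is a sum over histories of $\alpha_h z^{\lambda_{h,X}}$. The squared norm is therefore a sum of terms $\alpha_h\alpha_{h'}^* z^{\lambda_{h,X}}(z^{\lambda_{h',X'}})^*$. Each such term involves at most $4q+2$ distinct points $x$ and exponents with $|j|\le 2q+2$ per point. By \cref{lem:phase-independence} with $d\ge 4q+2$, it suffices to match the single-point moments $\E[\zeta(t,u)^j(\zeta(t,u)^{j'})^*]=\E[\zeta(t,u)^{j-j'}]$ for $|j-j'|\le 2q+2$. Expanding $\zeta(t,u)^m=(t+u)^m\sum_{\ell}(-tu)^\ell$, the Fourier coefficients of $u\mapsto\zeta(t,u)^m$ decay like $|m|^{O(1)}t^{\ell}$. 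If $\widetilde u$ is uniform over $N$-th roots of unity, aliasing shows that $\E[\zeta(t,\widetilde u)^m]-t^{|m|}$ is bounded by $\sum_{\ell\ge N-|m|}\poly(q)\,t^{\ell}\le \poly(q)\,t^{N/2}/(1-t)$. With $t\le 1-\tau$ this error is $\poly(q)e^{-\tau N/2}/\tau$, which is negligible once $N=\poly(1/\tau)\cdot\log$ is chosen suitably. The seeds $s_j$ are discretized to multiples of $1/N$, so $\theta_x\in\frac1N\mathbb Z/\mathbb Z$ stays $d$-wise independent and uniform on that grid. Summing over terms, we bound the result with Cauchy–Schwarz as $\sum_h|\alpha_h|\cdot\|\cdot\|$, giving a factor at most $|\X|^{O(q)}$ in the worst case. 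This is the main obstacle. We would handle it by noting that the aliasing error exponent can absorb it: take $N\ge \tau^{-1}\cdot O(q\log|\X|)$, which is still polynomial. Alternatively, for each fixed set of moments, we can regard the state as $\E$ over a genuine mixed-unitary channel and bound the difference in diamond norm by $2(2q+2)\max_m|\Delta_m|$ via a hybrid over the $2q+1$ phase gates. This second route avoids the $|\X|$ blow-up, and we would try it first.

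\emph{Step 2 ($t$).} After Step 1, $g(t):=\E_{\mathbf u}\Xi(A_{t,z})$ equals $\E_k\sum\|\cdot\|^2$ with moments $t^{|m|}$, so $g$ is a polynomial in $t$ of degree at most $4q+4$, and $0\le g\le 2(q+1)$. Restricting to $[0,1-\tau]$ loses at most $\tau\|g\|_\infty$. Replacing the integral by a uniform grid of spacing $\eta$ costs at most $\eta\|g'\|_\infty\le \eta\cdot 2(4q+4)^2\|g\|_\infty$ by \cref{lemma:markov01}. Choosing $\eta$ exponentially small in $\secpar$, which takes only $\poly$ many bits to sample, together with the $\tau$ term, gives a total error of $O(\tau q^3)+\negl$. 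With $\tau$ itself negligible this is $\negl(\tau)$ as claimed. Finally, the finite-precision computation of $\zeta(\widetilde t,\widetilde u_x)$ as a phase gate contributes an operator-norm error of $2^{-p}$ per gate. By the hybrid argument this gives an additional $O(q2^{-p})$, and choosing $p=\poly$ completes the bound.
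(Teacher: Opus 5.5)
Your proof is correct in outline, after the repairs below, but it takes a genuinely different route from the paper.

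\textbf{Comparison with the paper.} The paper argues by a pointwise coupling. It first truncates $t$ to $[0,1-\tau]$, at cost $(q+1)\tau$. On the truncated domain it uses $|\partial_u\zeta|\le 2/\tau$ and $|\partial_t\zeta|\le 2/\tau^2$, so rounding $t$ and each $u_x$ moves every phase by at most $\eps_1=2\eps_t/\tau^2+2\eps_u/\tau$. A hybrid over the $2q+1$ phase operators in operator norm then gives total error $(q+1)\tau+2(q+1)(2q+1)\eps_1$.

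You instead use the fact that, under $d$-wise independence, the expected mass depends on $z$ only through single-point moments $\E[z_x^m]$ with $|m|\le q$. You match these moments for the discrete $u$ by an aliasing estimate, and you handle $t$ through the exact polynomial structure of $g(t)$ together with Markov's inequality.

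\emph{What the paper's route buys.} It never expands into histories, so its error is free of $|\X|$ and costs only $O(\log(1/\tau))$ extra bits. \emph{What it needs.} A coupling in which every $\widetilde{u_x}$ is $\eps_u$-close to a $d$-wise independent continuous $u_x$; the paper leaves this implicit.

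\emph{What your route buys.} It needs no coupling at all, only that the discrete family be exactly $d$-wise independent on its grid. It also discretizes $t$ without the $1/\tau^2$ Lipschitz constant. \emph{Its price.} The triangle inequality over pairs of histories costs a factor $\sum_b(\sum_h|\alpha_{h,b}|)^2\le|\X|^q$, using $\sum_h\norm{\psi_h}^2=1$ for the history vectors $\psi_h$. Your exponentially small aliasing error absorbs this once $\tau N\gg q\log|\X|+q\log N$. So your first variant of Step 1 is the one to keep.

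\textbf{Repairs needed.}

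First, discretizing the seeds to $\frac1N\mathbb Z$ does not preserve $d$-wise independence for general $N$. Here $\theta_x=\sum_j a_x^j c_j/N \bmod 1$ depends only on $a_x \bmod N$, so messages with congruent encodings receive identical phases. For $N$ a power of two, even pairwise uniformity fails. Take $N=p$ prime and larger than all encodings, so the Vandermonde system is invertible over $\mathbb F_p$; $\log p$ is still polynomial.

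Second, the expansion is $(1+tu)^{-m}=\sum_\ell\binom{m+\ell-1}{\ell}(-tu)^\ell$, not $\sum_\ell(-tu)^\ell$. Hence the $n$-th Fourier coefficient of $\zeta(t,\cdot)^m$ is bounded by $2^m\binom{m+n}{m}t^{n-m}$. This is a degree-$m$ polynomial in $n$ times geometric decay, not $|m|^{O(1)}t^n$. With $|m|\le q$, this only adds the $q\log N$ term to the requirement on $N$ stated above.

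Third, discard the diamond-norm variant. A single $z$ drives all $2q+1$ phase gates, so the $z$-averaged evolution is not a composition of per-gate mixed-unitary channels. Closeness of moments therefore gives no gate-by-gate bound. A gate-by-gate hybrid needs coupled phases that are pointwise close, which is exactly the paper's argument and not something moment matching supplies.
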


\begin{proof}
Note that, when $t=1$, the denominator of $\zeta(t,u_x)$ may be $0$. 
We thus want to discard an interval $\tau$ near $t=1$ to ensure an efficient implementation of $\zeta(t,u_x)$. 
Thus for any message $x$, $0\leq t\leq 1-\tau$ and $|u_x|=1$,
\[
    |1+tu_x|\geq 1-t\geq \tau.
\]
Let $\Xi(A)^{\tau}$ be the success measurement for when $t$ is sampled uniformly from $[0, 1-\tau]$. The distance between the success measurement in the ideal case $\Xi(A)$ and $\Xi(A)^{\tau}$ can be bounded for $q+1$ output messages as
\[
    \left|\E \Xi(A) - \E \Xi(A)^{\tau}\right| \leq (q+1)\tau.
\]

Due to the truncated domain of $t$, for any $|u_x|=1$ and in $t\leq 1-\tau$, we can bound the partial derivatives of function $\zeta(t,u_x)$ as
\begin{align}
    \frac{\partial\zeta(t,u_x)}{\partial u_x} &= \frac{1-t^2}{|1+tu_x|^2} \leq \frac{2}{\tau} \\
    \frac{\partial\zeta(t,u_x)}{\partial t} &= \frac{|1-u_x^2|}{|1+tu_x|^2} \leq \frac{2}{\tau^2}.
\end{align}

Let $\widetilde{t}$ and $\widetilde{u_x}$ be a discrete uniform sampling of $t$ and $u_x$ for each message $x$. Set
\[
    |t - \widetilde{t}| \leq \eps_{t}, 
    \qquad
    \sup_{x}|u_x - \widetilde{u_x}| \leq \eps_{u}.
\]
The approximation of $\zeta(t,u_x)$ over the truncated domain, on the discrete inputs $\widetilde{t}$ and $\widetilde{u_x}$ is given by
\[
    \zeta(t,u_x)\Big|_{\widetilde{t},\widetilde{u_x}} = 
    \frac{\partial \zeta(t,u_x)}{\partial t} |t-\widetilde{t}| +
    \frac{\partial \zeta(t,u_x)}{\partial u_x} |u_x-\widetilde{u_x}|.
\]
We have the phase for message $x$ as, $z_x = \zeta(t,u_x)$ and its discrete implementation as  $\widetilde{z_x} = \zeta(\widetilde{t},\widetilde{u_x})$. For $t\leq 1-\tau$ we can bound the change in the phase due to discretizations as 
\begin{align*} 
    \sup_{x}|z_x - \widetilde{z_x}| &\leq \frac{2\eps_t}{\tau^2} + \frac{2\eps_{u}}{\tau} := \eps_1.
\end{align*}
As show in Figure~\ref{fig:finite-precision-smoothing}, we now want to evaluate how this error propagates through the algorithm.
Let $R_z$ be the phase operator implementing the required phase $z_x$ for all messages $x$ and let $\widetilde{R_z}$ be its discrete implementation. 
\[
R_z : |x\rangle \mapsto z_x |x\rangle ,
\qquad
\widetilde{R_z} : |x\rangle \mapsto \tilde z_x |x\rangle .
\]

We have 
\[
\|R_z - \widetilde{R_z}\|\leq \eps_1.
\]
The smoothing circuit applies $q$ of the phase operators. Furthermore, the final state has the product of phases associated with the $q+1$ output messages. Let $C_R$ and $\widetilde{C_R}$ be the complete smoothing circuits for the exact and discrete phases respectively. The standard hybrid argument gives 
\[
    \|C_R - \widetilde{C_R}\| \leq q\eps_1 + (q+1)\eps_1 = (2q+1)\eps_1.
\]
Now the output is restricted to histories that are zero-touched at any fixed output message $x$. If $\alpha$ and $\widetilde{\alpha}$ are the amplitudes of such zero-touch histories in the ideal and discrete case respectively, with $\|\alpha\|^2, \|\widetilde{\alpha}\|^2 \leq 1$ we have
\[
    \left|\|\alpha\|^2 - \|\widetilde{\alpha}\|^2 \right| \leq 2(2q+1)\eps_1.
\]
Summing over such $q+1$ output messages, the distance between the success measurements in the truncated setting $\Xi(A)^{\tau}$ and the discrete implementation $\Xi\widetilde{(A)^{\tau}}$ 
\[
    \left| \E_z \Xi(A)^{\tau} - \E_{\widetilde{z}} \Xi\widetilde{(A)^{\tau}} \right|
    \leq 2(q+1)(2q+1)\eps_1.
\]

Hence the total error accumulated due to truncation and discrete implementation is
\[
    \left| \E_z \Xi(A) - \E_{\widetilde{z}} \Xi\widetilde{(A)^{\tau}} \right|
    \leq (q+1)\tau +2(q+1)(2q+1)\eps_1.
\]

With $q=\poly(\secpar)$ we can set $\tau = 2^{-\poly(\secpar)}$. Furthermore, let the error due to the discrete implementation can be set as $\eps_t = O(\tau^3)$ and $\eps_u = O(\tau^2)$. This ensures that the total error propagated due to truncation and discrete implementation is at most negligible.
\end{proof}

%% file: approx-chain-ckt.tex
\begin{figure}[ht]
\centering
\begin{tikzpicture}[
  x=1cm,y=1cm,
  >={Stealth[length=1.8mm]},
  arrow/.style={
      ->,
      line width=0.45pt,
      draw=black!70
  },
  err/.style={
      <->,
      line width=0.45pt,
      draw=black!60,
      dashed
  },
  box/.style={
      draw=black!65,
      fill=white,
      rounded corners=2pt,
      minimum height=0.72cm,
      inner sep=3pt,
      align=center,
      font=\scriptsize
  },
  panel/.style={
      draw=black!55,
      fill=black!2,
      rounded corners=6pt
  },
  lab/.style={
      font=\scriptsize,
      align=center
  },
  tinylab/.style={
      font=\tiny,
      align=center
  }
]


\node[
  panel,
  fit={(-0.15,1.55) (11.55,-6.18)},
  inner sep=0pt
] (outer) {};

\node[
    draw=black!55,
    fill=black!55,
    rounded corners=3pt,
    anchor=west,
    font=\bfseries\footnotesize,
    text=white
]
at (0.10,1.55)
{Fig.~\ref{fig:finite-precision-smoothing}: Finite-precision Poisson smoothing};


\node[
    font=\bfseries\footnotesize,
    anchor=west
]
at (0.25,1.10)
{(a) Truncate the ill-conditioned tail};

\node[
    panel,
    fit={(0.15,0.95) (11.30,-1.22)},
    inner sep=0pt
] (topbox) {};

\draw[line width=0.55pt]
    (0.95,0.22) -- (4.65,0.22);

\draw[line width=0.55pt]
    (0.95,0.10) -- (0.95,0.34);

\draw[line width=0.55pt]
    (3.75,0.10) -- (3.75,0.34);

\draw[line width=0.55pt]
    (4.65,0.10) -- (4.65,0.34);

\fill[black!12]
    (3.75,0.10) rectangle (4.65,0.34);

\node[tinylab,anchor=north]
    at (0.95,0.06) {$0$};

\node[tinylab,anchor=north]
    at (3.75,0.06) {$1-\tau$};

\node[tinylab,anchor=north]
    at (4.65,0.06) {$1$};

\node[tinylab,anchor=south]
    at (2.25,0.39) {retain};

\node[tinylab,anchor=south]
    at (4.18,0.39) {discard};


\node[
    font=\scriptsize,
    text width=5.25cm,
    align=left,
    anchor=north west
]
at (5.25,0.85)
{
    \textbf{Conditioning.}
    For $t\le 1-\tau$,
    \\[3pt]
    $\displaystyle
        |1+t u_x|
        \ge
        1-t
        \ge
        \tau .
    $
    \\[4pt]
    Hence
    $\zeta_t(u_x)
        = {(t+u_x)}/{(1+t u_x)}
    $
    is uniformly well-conditioned.
};


\node[
    draw=black!50,
    fill=black!5,
    rounded corners=3pt,
    font=\scriptsize,
    align=center,
    minimum width=9.40cm,
    minimum height=0.52cm
]
at (5.70,-0.91)
{
    $\displaystyle
    \bigl|
        \mathbb E \Xi(A)
        -
        \mathbb E \Xi(A)^{(\tau)}
    \bigr|
    \le
    (q+1)\tau
    $
};


\node[
    font=\bfseries\footnotesize,
    anchor=west
]
at (0.25,-1.55)
{(b) Propagate finite-precision error};

\node[
    panel,
    fit={(0.15,-1.70) (11.30,-5.25)},
    inner sep=0pt
] (botbox) {};

\node[tinylab,anchor=east]
    at (0.95,-2.38)
    {exact};

\node[tinylab,anchor=east]
    at (0.95,-3.58)
    {finite};

\def\yT{-2.38}
\def\yB{-3.58}

\def\xA{1.95}
\def\xB{4.05}
\def\xC{6.15}
\def\xD{8.25}
\def\xE{10.35}


\node[
    box,
    minimum width=1.45cm
] (A1)
at (\xA,\yT)
{$t,u_x$};

\node[
    box,
    minimum width=1.70cm
] (A2)
at (\xB,\yT)
{$z_x=\zeta_t(u_x)$};

\node[
    box,
    minimum width=1.30cm
] (A3)
at (\xC,\yT)
{$R_x$};

\node[
    box,
    minimum width=1.70cm
] (A4)
at (\xD,\yT)
{$\mathcal C_{R_x}$};

\node[
    box,
    minimum width=1.45cm
] (A5)
at (\xE,\yT)
{$\Xi(A)^{(\tau)}$};


\node[
    box,
    minimum width=1.45cm
] (B1)
at (\xA,\yB)
{$\widetilde t,\widetilde u_x$};

\node[
    box,
    minimum width=1.70cm
] (B2)
at (\xB,\yB)
{$\widetilde z_x$};

\node[
    box,
    minimum width=1.30cm
] (B3)
at (\xC,\yB)
{$\widetilde R_x$};

\node[
    box,
    minimum width=1.70cm
] (B4)
at (\xD,\yB)
{$\mathcal C_{\widetilde R_x}$};

\node[
    box,
    minimum width=1.45cm
] (B5)
at (\xE,\yB)
{$\Xi(\widetilde A)$};


\draw[arrow] (A1.east) -- (A2.west);
\draw[arrow] (A2.east) -- (A3.west);
\draw[arrow] (A3.east) -- (A4.west);
\draw[arrow] (A4.east) -- (A5.west);


\draw[arrow] (B1.east) -- (B2.west);
\draw[arrow] (B2.east) -- (B3.west);
\draw[arrow] (B3.east) -- (B4.west);
\draw[arrow] (B4.east) -- (B5.west);


\draw[err]
    (A1.south) -- (B1.north)
    node[midway,right=1pt,tinylab]
    {$\varepsilon_t, \varepsilon_u$};

\draw[err]
    (A2.south) -- (B2.north)
    node[midway,right=1pt,tinylab]
    {$\varepsilon_1$};

\draw[err]
    (A3.south) -- (B3.north)
    node[midway,right=1pt,tinylab]
    {$\varepsilon_2$};

\draw[err]
    (A4.south) -- (B4.north)
    node[midway,right=1pt,tinylab]
    {$\varepsilon_3$};

\draw[err]
    (A5.south) -- (B5.north)
    node[midway,right=1pt,tinylab]
    {$\varepsilon_4$};


\node[
    tinylab,
    text width=2.45cm,
    anchor=north
]
at (\xB,-4.02)
{
    $\varepsilon_1
      \le
      \frac{2\varepsilon_t}{\tau^2}
      +\frac{2\varepsilon_u}{\tau}$
};

\node[
    tinylab,
    text width=1.50cm,
    anchor=north
]
at (\xC,-4.02)
{
    $\varepsilon_2
      \le
      \varepsilon_1$
};

\node[
    tinylab,
    text width=1.95cm,
    anchor=north
]
at (\xD,-4.02)
{
    $\varepsilon_3
      \le
      (2q+1)\varepsilon_2$
};

\node[
    tinylab,
    text width=2.00cm,
    anchor=north
]
at (\xE,-4.02)
{
    $\varepsilon_4
      \le
      2(q+1)\varepsilon_3$
};


\node[
    draw=black!50,
    fill=black!5,
    rounded corners=3pt,
    font=\scriptsize,
    align=center,
    minimum width=9.40cm,
    minimum height=0.58cm
]
at (5.70,-4.87)
{
    $\displaystyle
    \bigl|
        \mathbb E \Xi(A)^{(\tau)}
        -
        \mathbb E \Xi(\widetilde A)
    \bigr|
    \le
    \varepsilon_4
    \le
    2(q+1)(2q+1)\varepsilon_1
    $
};


\node[
    draw=black!65,
    fill=white,
    rounded corners=3pt,
    font=\scriptsize,
    align=center,
    minimum width=10.25cm,
    minimum height=0.66cm
]
at (5.70,-5.80)
{
    \textbf{Combined error:}\qquad
    $\displaystyle
    \bigl|
        \mathbb E \Xi(A)
        -
        \mathbb E \Xi(\widetilde A)
    \bigr|
    \le
    (q+1)\tau
    +
    2(q+1)(2q+1)\varepsilon_1
    $
};

\end{tikzpicture}
\caption{
Finite-precision implementation of Poisson smoothing.
\textbf{Top.}
Truncating the smoothing parameter to $t\in[0,1-\tau]$ loses at most
$(q+1)\tau$ in expected singleton zero-touch mass and guarantees
$|1+t u_x|\ge\tau$, making
$\zeta_t(u_x)=(t+u_x)/(1+t u_x)$ uniform and computable.
\textbf{Bottom.}
After truncation, finite-precision error is propagated through sampling
$(t,u_x)$, evaluation of the phase multipliers, implementation of the
phase operators, and substitution into the smoothed $q$-query circuit.
The term $\varepsilon_{t}, \varepsilon_{u}$
is the additional numerical error incurred
$\zeta({\widetilde t}, \widetilde u_x)$ at
input discretization error.
The remaining arrows propagate this phase error through the implemented
unitaries and the full smoothed circuit, and the local truncation and
finite-precision bounds combine to give the final approximation between
the ideal and implementable zero-touch masses.
}

\label{fig:finite-precision-smoothing}
\end{figure}

%% file: spdu2po.tex
\begin{corollary}[$\dBU$ implies $\PO$]
\label{cor:spdu-implies-po}
If a function family $F$ is $\dBU$-secure, then
$F$ is $\PO$-secure.  Specifically, any $q$-query $\PO$ adversary $A$ succeeding with
probability $\delta$ yields an efficient singleton-puncture adversary $C$
using at most $3q+1$ queries and satisfying
\[
    \rho_F(C)
    \ge
    \frac{\delta}{4(q+1)}
    -
    \negl(\secpar).
\]
\end{corollary}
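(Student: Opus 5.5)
The plan is to compose the three ingredients already established: Poisson smoothing (\cref{thm:singleton-puncture-smoothing}), its efficient implementation (\cref{thm:efficient-implementation}), and singleton zero-touch extraction (\cref{lemma:singlton-ext}). Given a $q$-query $\PO$ adversary $A$ with success probability $\delta$, we define $C$ as follows. $C$ samples a discretized $\widetilde t$ uniformly from $[0,1-\tau]$. It also samples a seed vector $\mathbf s$, which defines the $(2q+1)$-wise independent phases $\widetilde u_x$ and hence $\widetilde z_x=\zeta(\widetilde t,\widetilde u_x)$. It then builds the smoothed adversary $\widetilde A_{\widetilde t,\widetilde z}$. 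Finally, it runs the root-of-unity touch-count twirl extraction of \cref{lemma:singlton-ext} with $B:=\widetilde A_{\widetilde t,\widetilde z}$, using the unknown singleton-punctured oracle $F_k^{\setminus a}$. The phase operators use no oracle calls, so $B$ is still a $q$-query algorithm and $C$ makes at most $3q+1$ queries. Since all components are polynomial-size circuits, $C$ is efficient once the precision parameters are polynomially bounded in bit-length.

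For the quantitative bound, we fix the randomness $(\widetilde t,\widetilde z)$. \cref{lemma:singlton-ext} applied to $B=\widetilde A_{\widetilde t,\widetilde z}$ gives
\[
\rho_F\bigl(C_{\widetilde t,\widetilde z}\bigr)\ge \Xi\bigl(\widetilde A_{\widetilde t,\widetilde z}\bigr).
\]
Because $\rho_F$ is linear in the internal randomness of $C$, averaging yields
\[
\rho_F(C)\ge \E_{\widetilde t,\widetilde z}\,\Xi\bigl(\widetilde A_{\widetilde t,\widetilde z}\bigr).
\]
Next, \cref{thm:efficient-implementation} lets us replace this average by the ideal quantity $\E_{t,z}\Xi(A_{t,z})$, up to an additive error of at most $(q+1)\tau+2(q+1)(2q+1)\eps_1$. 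Choosing $\tau=2^{-\secpar}$, $\eps_t=O(\tau^3)$ and $\eps_u=O(\tau^2)$ makes this error $\negl(\secpar)$, since $q=\poly(\secpar)$, while keeping the bit precision polynomial. \cref{thm:singleton-puncture-smoothing} then gives $\E_{t,z}\Xi(A_{t,z})\ge \delta/(4(q+1))$, and chaining the inequalities yields the stated bound. For the qualitative statement, a noticeable $\delta$ with polynomial $q$ gives noticeable $\rho_F(C)$, contradicting $\dBU$-security.

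The step that needs the most care is justifying that limited independence of the phases suffices for \cref{thm:singleton-puncture-smoothing}, which is stated for independent $z_x$ with $\E[z_x^j]=t^{|j|}$. The identity $\E_z[z^{\lambda_{h,X}}]=t^{\|\lambda_{h,X}\|_1}$ only involves the support of $\lambda_{h,X}$. That support lies in $X\cup\{x_1,\ldots,x_q\}$, which has at most $2q+1$ points. So $(2q+1)$-wise independence from \cref{lem:phase-independence} reproduces every expectation the proof actually uses, namely the computation of $U_a(t)=\E_z u_a(z)$ entrywise in the history basis. I would verify this explicitly and then note that the Jensen step $\norm{U_a}^2\le\E_z\norm{u_a}^2$ holds for any distribution. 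A secondary check is that $\Xi$ as used in the extraction lemma is the squared-norm quantity that appears in the proof of the smoothing theorem, so the two bounds chain consistently.
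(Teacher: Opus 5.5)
Your proposal is correct and follows the paper's proof. It chains Poisson smoothing (\cref{thm:singleton-puncture-smoothing}), the finite-precision replacement (\cref{thm:efficient-implementation}), and touch-count-twirl extraction (\cref{lemma:singlton-ext}) in the same order, reaching the same bound $\delta/(4(q+1))-\negl(\secpar)$. The differences are minor refinements: you apply the extraction lemma for each fixed $(\widetilde t,\widetilde z)$ and average $\rho_F$ over $C$'s internal coins, rather than asserting $\Xi(B)=\E_{\widetilde t,\widetilde z}\Xi(\widetilde A_{\widetilde t,\widetilde z})$ for a randomized $B$ as the paper does. You also make explicit two checks the paper leaves implicit: that $(2q+1)$-wise independent phases suffice, and that $\Xi$ must be the squared-norm quantity for the bounds to chain.
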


\begin{proof}
Let $A$ be a $q$-query $\PO$ adversary with success probability $\delta$.
By the Poisson-smoothing (\cref{thm:singleton-puncture-smoothing}),
\[
    \E_{t,z}\,\Xi(A_{t,z})
    \ge
    \frac{\delta}{4(q+1)}.
\]
By Theorem~\ref{thm:efficient-implementation}, the ideal smoothing can be
replaced by its efficient finite-precision implementation with only
negligible loss.  Thus, if a $q$-query $\PO$ adversary $B$ samples
$(\widetilde t,\widetilde z)$ internally and runs the corresponding
smoothed adversary,
\[
    \Xi(B)
    =
    \E_{\widetilde t,\widetilde z}
        \Xi(\widetilde A_{\widetilde t,\widetilde z})
    \ge
    \frac{\delta}{4(q+1)}
    -
    \negl(\secpar).
\]

Applying the singleton zero-touch extraction (\cref{lemma:singlton-ext}) to $B$
produces an efficient adversary $C$ making at most $3q+1$ queries to the
singleton-punctured oracle and satisfying
\[
    \rho_F(C)
    \ge
    \Xi(B) \ge
    \frac{\delta}{4(q+1)}
    -
    \negl(\secpar) \, .
\]

Consequently, if $F$ is $\dBU$, then
$\rho_F(C)$ is negligible for every efficient $C$, which forces $\delta$
to be negligible.  Hence $F$ is $\PO$-secure.
\end{proof}

We then obtain the  $\BU$ to $\PO$ implication immediately, and the quantitative bound follows from~\cref{lemma:bu2dbu}. 

\begin{corollary}[$\BU$ implies $\PO$]
\label{cor:bu-implies-po}
Let $A$ be a $q$-query adversary that wins the $\PO$ game against
$F$ with probability $\delta$. Then there exists an efficient
$\BU$ adversary $C$, making at most $3q+1$ oracle queries, such that
\[
    \sup_{\beta\in[0,1]}
    \Pr\!\left[C\text{ wins }\BU_\beta\right]
    \ge
    \frac{\delta}
         {8(q+1)(6q+3)^2}
    -
    \negl(\secpar) = \Omega\!\left(\frac{\delta}{q^3}\right)
    -
    \negl(\secpar) \, .\]

Moreover, for the explicit blinding rate
\[
    \beta_0
    :=
    \frac{\delta}
         {16(q+1)(6q+3)^4},
\]
the same adversary satisfies
\[
    \Pr\!\left[C\text{ wins }\BU_{\beta_0}\right]
    \ge
    \frac{\delta^2}
         {128(q+1)^2(6q+3)^4}
    -
    \negl(\secpar).
\]

Consequently, $\BU$ security implies $\PO$ security.
\end{corollary}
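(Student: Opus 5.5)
The plan is to chain \cref{cor:spdu-implies-po} with \cref{lemma:bu2dbu}, keeping the constants explicit. Start from the $q$-query $\PO$ adversary $A$ with success probability $\delta$. Apply \cref{cor:spdu-implies-po} to obtain an efficient singleton-puncture adversary $C$ that makes at most $q' := 3q+1$ queries and satisfies
\[
    \rho_F(C) \ge \frac{\delta}{4(q+1)} - \negl(\secpar).
\]
Now view this same $C$ as a $\BU$ adversary. It is run on the Bernoulli-blinded oracle $F_k^{\setminus B}$ in place of a singleton-punctured oracle, and it needs no modification. Its Check step may not find a unique $\bot$ in this setting, so it simply outputs any reported pair whose check returns $\bot$. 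The win condition in $\BU_\beta$ is then exactly ``outputs $(x,F_k(x))$ with $x\in B$''. Hence $G_C(\beta)$ is the quantity appearing in \cref{prop:polychar} and \cref{lemma:bu2dbu}.

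Next, invoke \cref{lemma:bu2dbu} with $q'$ queries, so the relevant degree is $2q'+1 = 6q+3$. This gives
\[
    \sup_{\beta\in[0,1]} G_C(\beta) \ge \frac{\rho_F(C)}{2(6q+3)^2} \ge \frac{\delta}{8(q+1)(6q+3)^2} - \negl(\secpar),
\]
since dividing a negligible term by a polynomial leaves it negligible. This is $\Omega(\delta/q^3)$.

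For the explicit rate I will not plug the possibly unknown $\rho_F(C)$ into $\beta_0$. Instead I will redo the Taylor step from the proof of \cref{lemma:bu2dbu}:
\[
    G_C(\beta) \ge \rho_F(C)\,\beta - 2(6q+3)^4\beta^2.
\]
The right-hand side is increasing in $\rho_F(C)$, so I may substitute the lower bound $\rho^\ast := \delta/(4(q+1))$, up to negligible terms. I then evaluate at $\beta_0 = \rho^\ast/(4(6q+3)^4) = \delta/(16(q+1)(6q+3)^4)$. This yields
\[
    G_C(\beta_0) \ge \frac{(\rho^\ast)^2}{8(6q+3)^4} = \frac{\delta^2}{128(q+1)^2(6q+3)^4},
\]
minus a negligible term, and $\beta_0\in[0,1]$ holds trivially.

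Finally, if $F$ is $\BU$-secure, both bounds force $\delta$ to be negligible, which gives $\PO$-security. The main point needing care, though not difficult, is the monotonicity substitution with the negligible slack: $\beta_0$ is a fixed polynomial quantity, so $\negl\cdot\beta_0$ remains negligible. Equally, one must confirm that \cref{prop:polychar} applies to $C$ with query count $3q+1$, including its final check queries, so that the degree bound $6q+3$ is the correct one.
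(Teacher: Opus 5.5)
Your proposal is correct and follows the paper's own argument, which is the same chaining: \cref{cor:spdu-implies-po} gives $\rho_F(C)\ge \delta/(4(q+1))-\negl(\secpar)$ with $3q+1$ queries, and \cref{lemma:bu2dbu} is then applied at degree $2(3q+1)+1=6q+3$. Beyond the paper's one-line proof, you spell out the monotonicity substitution, which is genuinely needed because the corollary's $\beta_0$ is defined via $\delta$ rather than $\rho_F(C)$; your change to the Check step is harmless but unnecessary, since \cref{prop:polychar} applies to any $(3q+1)$-query algorithm regardless of its behavior on multiply-blinded oracles.
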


%% file: counterexample.tex
From the discussion above, we know that for an attacker $A$ to win $\BU$, it has to have large singleton zero-touch mass $\Xi(A)$. Here we give an explicit example where a $\PO$ attacker $A$ that succeeds with probability approaching one but having $\Xi(A)$ identically zero. Hence a reduction that attempts to use a $\PO$ attacker without modifying its interference pattern is not guaranteed to win $\BU$. More generally, this rules out any argument whose quantitative guarantee factors through $\Xi(A)$. We may call such arguments \emph{raw} singleton extraction.   

\begin{theorem}[All-or-nothing cancellation for raw singleton extraction]
\label{thm:aon-cancellation}
For every $q\ge 1$, there exists a keyed function family and a $q$-query algorithm $A$ such that
\[
    \Pr[A \text{ wins } \PO]
    =
    \frac{q^2}{(q+2)^2},
\]

while the singleton zero-touch mass of $A$ is identically \emph{zero}.  In particular, there is no general lower bound of the form

\[
    \Xi(A)
    \ge
    \frac{\Pr[A\text{ wins }\PO]}{\poly(q)}
\]

for the zero-touch mass obtained by regrouping the histories of the original, unmodified algorithm.
\end{theorem}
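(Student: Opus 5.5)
The plan is to realize the cancellation mechanism from the introduction with only two omission layers. We want a layer $v_X$ of histories that omit all of $X$, and layers $v_{\{a\}}$, $a\in X$, of histories that omit exactly one reported point, arranged so that $v_{\{a\}}=-v_X$ for every $a$. Then each zero-touch sector $\xi_a=\sum_{M\ni a}v_M=v_X+v_{\{a\}}$ vanishes, while the total successful component is $v=v_X+\sum_{a}v_{\{a\}}=-q\,v_X$. That already suggests the amplitude $q/(q+2)$ and hence the stated probability $q^2/(q+2)^2$.

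\emph{Function family.} Since the theorem only asks for some family, I will take a trivial one, e.g.\ $F_k\equiv 0$ on a domain with at least $q+2$ points. Every query answer is then known in advance and can be uncomputed locally. Fix $X=\{a_0,\dots,a_q\}$ of size $q+1$ and an extra point $w\notin X$.

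\emph{Algorithm $A$.} Use a branch register spanned by $\ket{0},\ket{a_0},\dots,\ket{a_q}$.
\begin{enumerate}
\item Prepare $\ket{s}=(\ket 0+\sum_{a\in X}\ket a)/\sqrt{q+2}$.
\item In branch $\ket 0$, make $q$ queries all at $w$. In branch $\ket a$, make one query at each point of $X\setminus\{a\}$, in a fixed order.
\item After each query, XOR the known answer $0$ back and uncompute the address register controlled on the branch. The work registers then return to a common state in every branch.
\item Finally, measure the branch register in a basis containing $\ket{s'}=(\ket 0-\sum_{a\in X}\ket a)/\sqrt{q+2}$. On outcome $s'$, output the pairs $(a_i,0)$; otherwise output an abort symbol.
\end{enumerate}

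\emph{Computation.} Every history lies in exactly one branch, and distinct branches have distinct query-address sequences. Project the final state onto the success event. Branch $0$ then contributes $\langle s'|0\rangle\langle 0|s\rangle=1/(q+2)$ times a fixed unit vector, and this is $v_X$. Each branch $a$ contributes $-1/(q+2)$ times the same vector, and this is $v_{\{a\}}$. The success amplitude is therefore $(1-(q+1))/(q+2)=-q/(q+2)$, which gives $\Pr[A\text{ wins }\PO]=q^2/(q+2)^2$. Only $X$ is ever output validly, so the abort outcomes contribute nothing to $\Pi_{k,a}$.

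For $\Xi(A)$, fix $a\in X$. The filter $E_{a,0}$ keeps exactly the histories of branch $0$ and branch $a$, because every other branch $b$ queries $a$. After applying $\Pi_{k,a}$ the surviving amplitude is $1/(q+2)-1/(q+2)=0$. For $a\notin X$, $\Pi_{k,a}$ already annihilates the state. Hence $\Xi(A)=0$ for every key, and the ``in particular'' statement follows immediately.

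\emph{Main obstacle.} The delicate step is to make sure the analytic history decomposition really yields $v_{\{a\}}=-v_X$ as \emph{vectors}, not just as amplitudes. All branches must end in literally the same state on the non-branch registers before the final measurement, so that the post-selected components are collinear. I will check this by writing the algorithm as $A_qO_F\cdots O_FA_0$ and verifying that, after each query, $A_i$ restores the query and answer registers to $\ket{0}$ in every branch. This is where the constant, publicly known $F_k$ is used.
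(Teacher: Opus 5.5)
Your proposal is correct and is essentially the paper's construction: the uniform branch state $|s\rangle$, branch $0$ querying an outside point $q$ times, branch $a$ querying each point of $X\setminus\{a\}$ once, and a final projection onto $(|0\rangle-\sum_{a\in X}|a\rangle)/\sqrt{q+2}$. This gives $v_{\{a\}}=-v_X$, so every zero-touch amplitude $v_X+v_{\{a\}}$ vanishes while the success amplitude is $-q/(q+2)$. The only difference is cosmetic: the paper uses the key-dependent constant $F_K\equiv K$ on $X\cup\{c\}$, so the common work state $|\psi_K\rangle$ carries $K$ for the output, whereas your $F_k\equiv 0$ makes the oracle act trivially and the uncomputation automatic; both suffice for the statement as written.
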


\begin{proof}
Let \(X=\{1,\ldots,q+1\}\) and let $c\notin X$ be an additional input point in the domain \(\X\).  Consider the keyed function

\[
    F_K(x)=K
    \qquad
    \text{for every }x\in X\cup\{c\}.
\]

The algorithm uses a branch register with orthonormal basis \(
    \{\ket{0}\}\cup\{\ket{a}:a\in X\}\) to control the computation. It is initialized to

\[
    |s\rangle
    :=
    \frac{1}{\sqrt{q+2}}
    \left(
        \ket{0}+\sum_{a\in X}\ket{a}
    \right).
\]

Conditioned on branch $0$, the algorithm queries the $c$ on
all $q$ oracle calls.  Conditioned on branch $a\in X$, it queries each $x$
in $X\setminus{a}$ exactly once.  These branch-dependent query addresses can
be computed coherently from the branch register before each oracle call and
uncomputed afterwards, so this defines a valid $q$-query quantum algorithm.

Since $F_K$ is constant on $X\cup{c}$, every branch receives the same oracle
value $K$.  Hence, after the $q$ queries, all non-branch registers may be made
identical across branches.  In particular, the state can be written in product form as
\(  |s\rangle_B\otimes |\psi_K\rangle\), where $|\psi_K\rangle$ contains $K$ and is independent of the branch label.

At the end, the algorithm performs on the branch register the two-outcome measurement 

\[
    \bigl\{
        \ket{t}\bra{t},
        I-\ket{t}\bra{t}
    \bigr\} \quad \text{ with }  \ket{t}
    :=
    \frac{1}{\sqrt{q+2}}
    (
        \ket{0}-\sum_{a\in X}\ket{a}
    ) \, . 
\]

Then on the first outcome, it outputs the $q+1$ valid
pairs \(
    \{(x,K):x\in X\}\); and on the second outcome it outputs some fixed invalid tuple.  
    
Therefore its $\PO$ success probability is

\[
    |\langle t|s\rangle|^2
    =
    \left|
        \frac{1-(q+1)}{q+2}
    \right|^2
    =
    \frac{q^2}{(q+2)^2} \, .
\]

We now examine the raw singleton zero-touch amplitudes.  For the successful
output set $X$, let $v_M$ denote the coherent accepting amplitude contributed
by histories whose omitted reported set is exactly $M\subseteq X$.

\begin{itemize}
    \item  Branch $0$ queries none of the reported addresses and therefore has omitted
set $
    M_0=X$, and contributes to the accepting amplitude by
\(    v_X
    =
    \frac{1}{q+2}|\psi_K\rangle\).

    \item For each $a\in X$, branch $a$ queries all reported points except $a$, and
hence has omitted set $M_a=\{a\}$, and since $\langle t|a\rangle=-1/\sqrt{q+2}$, its contribution to the accepting amplitude is \( v_{\{a\}}
    = -\frac{1}{q+2}|\psi_K\rangle\).
\end{itemize}

Consequently, the total accepting amplitude is

\[
    v_X+\sum_{a\in X}v_{\{a\}}
    =
    -\frac{q}{q+2}\ket{\psi_K}\, ,
\]
as required by the success calculation above.

Fix now any $a\in X$.  Exactly two branches are zero-touch at $a$: branch $0$, which omits all of $X$, and branch $a$, which omits only $a$.  Hence the singleton zero-touch amplitude at $a$ is

\[
    u_a
    =
    v_X+v_{\{a\}}
    =
    \frac{1}{q+2}\ket{\psi_K}
    -
    \frac{1}{q+2}\ket{\psi_K}
    =
    0\, .
\]

Thus

\[
    \Xi(A) := \sum_{a\in X} \norm{u_a}^2 = 0 \, ,
\]

despite the fact that $A$ wins $\PO$ with probability
$q^2/(q+2)^2$.
\end{proof}

The example shows that the path-wise pigeonhole principle does not survive
naive coherent regrouping.  Every successful history omits at least one
reported point, but histories with different omitted-set cardinalities can
interfere so that every singleton zero-touch class vanishes.  Hence in order to extract a zero-touch witness, we
need to first \emph{reshape} the original interference pattern, and this is exactly what Poisson smoothing realizes.

For a parameter $t\in[0,1]$, Poisson smoothing has the effect that, after
averaging over its random phases, a successful history omitting $r$ reported
points acquires the scalar multiplier $t^{2r-1}$.

In the example above there are only two omission layers relevant to a fixed
reported point $a$:

\[
    |M|=q+1
    \qquad\text{and}\qquad
    |M|=1.
\]

Their respective multipliers are therefore \(t^{2q+1}\)  and \(t\).  Hence the smoothed zero-touch amplitude at $a$ becomes

\[
    u_a(t)
    =
    \frac{t^{2q+1}-t}{q+2}
    \ket{\psi_K} \, .
\]

We can see that at $t=1$ we recover the original exact cancellation \(u_a(1)=0\).  But for every $0<t<1$,
\(t^{2q+1}\neq t\),  and therefore

\[
    u_a(t)\neq 0 \, .
\]

Thus Poisson smoothing does not attempt to identify which history omitted
which reported point, nor does it isolate histories omitting exactly one
point.  Instead, it continuously changes the relative weights of different
omission-cardinality layers.  A cancellation that is perfectly tuned in the
original computation may occur at one value of $t$, but it cannot remain
small throughout an interpolation that moves from zero amplitude at $t=0$
to the intact successful amplitude at $t=1$.

%% file: appendix.tex
\section{Algebraic Property of \(\BU\) success and \(\rho_F\)}
\label{sec:polychar}
\begin{proposition}[\cref{prop:polychar} restated] 
Let \(A\) be a \(q\)-query algorithm in the \(\BU\) game. Define 
\[ G_{A,k} (\beta) : = \Pr[ A^{F_k^{{\setminus B}}} \text{ wins } \BU: B\xleftarrow{\beta}X_\secpar ] \, .\]
Then \( G_A(\beta):=\E_k\!\left[G_{A,k}(\beta)\right]\) is a polynomial in \(\beta\) of degree at most \(d = 2q+1 \), and moreover
\[ G_A(0)=0, \qquad
 G_A'(0)=\rho_F(A) \, .\]
\label{prop:polychar-restate}
\end{proposition}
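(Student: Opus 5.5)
We treat the statement as an instance of the polynomial method applied to the blinding pattern, in the spirit of~\cite{BBCMdW01}. Fix $k$. Encode the blinding set $B$ by indicator bits $b_x\in\{0,1\}$, one per $x\in\X_\secpar$. For each $x$, the blinded oracle acts on the address subspace $P_x$ as
\[
O_{F_k^{\setminus B}}P_x=(1-b_x)\,O_{F_k}P_x+b_x\,O_\bot P_x ,
\]
where $O_\bot$ writes $\bot$. Insert the history decomposition of \cref{sec:quantum}. Each history $h=(x_1,\ldots,x_q)$ then contributes a vector whose entries are products of $q$ factors, each either $b_{x_i}$ or $1-b_{x_i}$. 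Since $b_x^2=b_x$, we can multilinearize. The final state $\ket{\psi_B}$ therefore has amplitudes that are multilinear polynomials of degree at most $q$ in the variables $(b_x)$.

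Next we write out the winning probability. It is
\[
\sum_{a}b_a\,\norm{\Pi_{k,a}\ket{\psi_B}}^2,
\]
where $\Pi_{k,a}$ projects onto outputs of the form $(a,F_k(a))$. The factor $b_a$ encodes the requirement that the forged point lies in $B$. Each term has degree at most $2q+1$ in $b$, and after multilinearizing again it is a multilinear polynomial $Q_k(b)$ of degree at most $2q+1$.

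Now average over $B\xleftarrow{\beta}\X_\secpar$. The bits are independent with $\E[b_x]=\beta$, so each monomial $\prod_{x\in S}b_x$ averages to $\beta^{|S|}$. Hence $G_{A,k}(\beta)=\sum_S c_S\beta^{|S|}$, a polynomial of degree at most $2q+1$. Averaging over $k$ preserves this, which proves the degree bound.

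For $G_A(0)=0$: at $\beta=0$ we have $B=\emptyset$ almost surely, so no forgery can lie in $B$. For the derivative, write
\[
G_{A,k}(\beta)=\sum_{B}\beta^{|B|}(1-\beta)^{N-|B|}\,Q_k(\mathbf 1_B).
\]
Differentiating at $\beta=0$, the terms with $|B|\ge 2$ vanish. The $B=\emptyset$ term contributes $-N\cdot Q_k(\mathbf 0)=0$. The singleton terms contribute $\sum_a Q_k(\mathbf 1_{\{a\}})$. For a singleton $B=\{a\}$ the oracle is exactly $F_k^{\setminus a}$, and the winning event is "outputs $(a,F_k(a))$". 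So $Q_k(\mathbf 1_{\{a\}})=p_{k,a}(A)$, and taking $\E_k$ gives $G_A'(0)=\rho_F(A)$.

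The one delicate point is that $\X_\secpar$ may be exponentially large, so $G_{A,k}$ is a polynomial in $\beta$ with exponentially many terms. This is harmless: the sum is finite, differentiation termwise is valid, and the degree bound comes from the history expansion, not from $N$. I would also state precisely how "wins" handles an output that is a list of pairs. Under the convention that the output is a single pair $(a,y)$, the events for different $a$ are disjoint, so the sum over $a$ is justified. If instead $A$ is allowed several candidate outputs, I would replace $\Pi_{k,a}$ accordingly.
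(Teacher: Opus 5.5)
Your proposal is correct and follows essentially the same route as the paper. Both encode the blinding by Boolean variables $b_x$ and note that each query is affine in $b_x$, so the final amplitudes are multilinear of degree at most $q$; both write the winning probability as $\sum_a b_a\,p_{k,a,F_k(a)}(\vec b)$ of degree at most $2q+1$, average over independent Bernoulli bits to get a univariate polynomial, and read off $G_A'(0)=\rho_F(A)$ from $\sum_S h_k(S)\beta^{|S|}(1-\beta)^{N-|S|}$, where only the singleton terms survive. Your added remarks make explicit some details the paper leaves implicit: the $B=\emptyset$ term drops out because $Q_k(\mathbf 0)=0$, the sum is finite on exponential domains, and the output is a single pair.
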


\begin{proof}
Fix a key \(k\), and introduce Boolean variables
\[
\Vec{b}:=(b_x)_{x\in X_\secpar}\in\{0,1\}^{X_\secpar},
\]
where \(b_x=1\) iff \(x\) is blinded. Hence the blind set \(B (\Vec{b}) = \{x: b_x = 1\}\), and we can regard \(F_k^{\setminus B(\Vec{b})}\) as an oracle parameterized by \(\Vec{b}\).

Recall the standard unitary oracle for \(F_k\) is specified by
\[ U_{F_k}|x,y,z\rangle =
|x,y\oplus F_k(x),z\rangle \, . \]
For the blinded oracle, encode \(\bot\) as an additional response value and define 
\[U_{\bot} |x,y,z\rangle =
|x,y\oplus \bot,z\rangle \, .\]

Then the blinded unitary oracle for \(F_k^{\setminus B(\Vec{b})}\) can be described as 
\[O_{k,\Vec{b}} = \sum_x |x\rangle\langle x|
\otimes
\Big((1-b_x)U_{F_k(x)}+ b_xU_\bot\Big) \, . \]

Therefore, we observe that for each query \(O_{k,\Vec{b}}\) acts as an affine transformation in
the variables \(b_x\).  Consequently, after \(q\) queries, every final
amplitude of \(A\) is a polynomial in \(\Vec{b}\) of total degree at most \(q\).
Since \(\{b_x\}\) are Boolean variables, we can multi-linearize these
polynomials using \(b_x^r=b_x (r\ge 1)\), without changing their values or increasing the degree. Write the final state of \(A\) as 
\[
|\psi_{k,\Vec{b}}\rangle
=
\sum_{x\in X_\secpar}\sum_{y\in Y_\secpar}
\sum_{w}
\alpha_{k,x,y,w}(\Vec{b})\,|x,y,w\rangle,
\]
where each
\(\alpha_{k,x,y,w}(b)\) is a multilinear polynomial of degree at most
\(q\).

For each possible classical output \((x,y)\), define
\[
p_{k,x,y}(\Vec{b})
:=
\sum_w
\left|\alpha_{k,x,y,w}(\Vec{b})\right|^2, 
\]
which denotes exactly the probability that \(A\)
outputs \((x,y)\).  Since it is a sum of products
\[
\alpha_{k,x,y,w}(\Vec{b})\,
\overline{\alpha_{k,x,y,w}(\Vec{b})},
\]
it agrees with a multilinear polynomial of degree
at most $2q$.

Since an output $(x,y)$ is a winning BU
forgery exactly when
\[
y=F_k(x)
\qquad\text{and}\qquad
b_x=1 \, ,
\]
the BU success probability for fixed \(k\) and \(\Vec{b}\) is
\[
P_k(\Vec{b})
:= \Pr\!\left[
A^{F_k^{\setminus B(b)}}\text{ wins }\BU
\right] = 
\sum_{x\in X_\secpar}
b_x\,p_{k,x,F_k(x)}(\Vec{b}) \, ,
\]
which is a multilinear polynomial of degree in \(\Vec{b}\) at most
\(2q+1\). Let us write the multilinear expansion
\[
P_k(\Vec{b})
=
\sum_{\substack{S\subseteq X_\secpar\\|S|\le 2q+1}}
c_{k,S}\prod_{x\in S}b_x \, ,
\]
with coefficients \(c_{k,S}\). 

Now let the coordinates $b_x$ be independent
$\operatorname{Bernoulli}(\beta)$ random variables.  By definition,
\[
G_{A,k}(\beta)
=
\E_{\Vec{b}\sim\operatorname{Ber}(\beta)^{X_\secpar}}
[P_k(\Vec{b})] =
\sum_{\substack{S\subseteq X_\kappa\\|S|\le 2q+1}}
c_{k,S}\beta^{|S|}
=
P_k(\beta,\ldots,\beta) \, .
\]

This becomes a univariate polynomial in \(\beta\) of degree at most $2q+1$.  Averaging over the random key \(k\) preserves the degree bound, so \(G_A(\beta):=\E_k\!\left[G_{A,k}(\beta)\right] \) has degree at most \(2q+1\). 

If \(\beta=0\), then \(B=\varnothing\) with probability one.  There is no
blinded point on which $A$ can forge, and therefore \(G_A(0)=0\). It remains to identify the derivative at the origin.  For fixed $k$,
let
\[
h_k(S)
:=
\Pr\!\left[
A^{F_k^{\setminus S}}\text{ wins }\BU
\right]
\]
when the blind set is exactly $S$.  Then
\[
G_{A,k}(\beta)
=
\sum_{S\subseteq X_\kappa}
h_k(S)\,
\beta^{|S|}(1-\beta)^{|X_\secpar|-|S|}.
\]
Since \(h_k(\varnothing)=0\), differentiating at $\beta=0$ leaves only
the singleton sets:
\[
G_{A,k}'(0)
=
\sum_{a\in X_\secpar} h_k(\{a\}).
\]
For \(S=\{a\}\), winning the BU game means precisely outputting the
missing pair \((a,F_k(a))\).  Thus
\[
h_k(\{a\})
=
\Pr\!\left[
A^{F_k^{\setminus\{a\}}}
\text{ outputs }(a,F_k(a))
\right].
\]
Averaging over $k$ gives
\[
G_A'(0)
=
\E_k\sum_{a\in X_\kappa}
\Pr\!\left[
A^{F_k^{\setminus\{a\}}}
\text{ outputs }(a,F_k(a))
\right]
=
\rho_F(A).
\]
\end{proof}